\pdfoutput=1

\documentclass
  [
    DIV=14,
    fontsize=12,
    abstract=true,
    bibliography=totoc
  ]
  {scrartcl}

\setkomafont{title}{\mdseries\rmfamily}
\setkomafont{author}{\large\scshape}
\setkomafont{date}{\small\rmfamily}

\usepackage{authblk}

\newcommand\keywords[1]{%
  \begingroup
  \renewcommand\thefootnote{}
  \footnote{{keywords:} #1}%
  \addtocounter{footnote}{-1}
  \endgroup
}
\newcommand\MSCcodes[1]{%
  \begingroup
  \renewcommand\thefootnote{}
  \footnote{{MSC codes:} #1}%
  \addtocounter{footnote}{-1}
  \endgroup
}
\newcommand\funding[1]{%
  \begingroup
  \renewcommand\thefootnote{}
  \footnote{{funding:} #1}%
  \addtocounter{footnote}{-1}
  \endgroup
}

\NewCommandCopy{\oldparagraph}{\paragraph}
\RenewDocumentCommand{\paragraph}{s O{#3} m}{%
  \IfBooleanTF{#1}
    {\oldparagraph*{#3.}}       
    {\oldparagraph[#2]{#3.}}    
}

\usepackage{caption}
\usepackage
  [leftcaption]
  {sidecap}

\usepackage{enumitem}
\setlist[enumerate,1]{label=\textbf{\textup{(\roman*)}}}
\setlist{
  topsep=1pt,
  itemsep=2pt,
  leftmargin=1cm,
  before=\leavevmode
}

\usepackage[
  backend=biber,
  style=alphabetic,
  giveninits=true, 
  backref=true
]{biblatex}
\usepackage{lmodern}
\usepackage{anyfontsize}
    
\usepackage{multirow}
\usepackage{hhline}

\usepackage{tabularray}
\UseTblrLibrary{booktabs}

\usepackage{adjustbox}
\usepackage{tcolorbox}

\newtcolorbox{standout}{
  colback=gray!15,
  boxrule=0pt,
  left=.3cm,
  right=.3cm,
  top=.18cm,
  bottom=.18cm,
  boxsep=0pt
}

\usepackage{amsmath}
\allowdisplaybreaks
\usepackage{mathtools}
\usepackage{amssymb}
\usepackage{amsthm}
\usepackage{xfrac}
\usepackage{stmaryrd} 
\usepackage{scalerel} 
\usepackage{stackengine} 
\usepackage[safe]{tipa} 
\usepackage[
  cal=euler,
  scr=dutchcal
]
 {mathalpha} 

 \newcommand{\bracket}[3]{%
  \stretchleftright
    {#1}
    {%
      \ensurestackMath{\addstackgap[1pt]{#2}}%
      \vrule width 0pt depth 2pt height 0pt
    }
    {#3}%
} 
\newcommand{\scaledbracket}[3]{%
  \ThisStyle{%
    \stretchleftright
      {#1}
      {
        \ensurestackMath{\addstackgap[1pt]{\SavedStyle #2}}%
        \vrule width 0pt depth 1.5pt height 0pt
      }
      {#3}%
  }%
}

\theoremstyle{plain}
\newtheorem{theorem}{Theorem}[section]
\newtheorem{lemma}[theorem]{Lemma}
\newtheorem{proposition}[theorem]{Proposition}
\newtheorem{corollary}[theorem]{Corollary}
\theoremstyle{definition}

\newtheorem{example}[theorem]{Example}

\theoremstyle{remark}

\usepackage[
  colorlinks=true,
  linkcolor=darkgreen,
  citecolor=darkgreen,
  urlcolor=darkgreen
]{hyperref}
\usepackage{xurl} 

\usepackage{cleveref}
\crefname{equation}{}{}
\crefname{section}{\S}{\S\S}
\crefname{subsection}{\S}{\S\S}
\crefname{subsubsection}{\S}{\S\S}
\crefname{definition}{Def.}{Defs.}
\crefname{theorem}{Thm.}{Thms.}
\crefname{corollary}{Cor.}{Cors.}
\crefname{lemma}{Lem.}{Lems.}
\crefname{proposition}{Prop.}{Props.}
\crefname{remark}{Rem.}{Rems.}
\crefname{notation}{Ntn.}{Ntns.}
\crefname{fact}{Fact}{Fact}
\crefname{example}{Ex.}{Exs.}
\crefname{figure}{Fig.}{Figs.}
\crefname{table}{Tab.}{Tabs.}
\crefname{footnote}{ftn.}{ftns.}
\Crefname{footnote}{Ftn.}{Ftns.}

\usepackage{tikz}
\usetikzlibrary{
  cd,
  calc,
  arrows.meta,
  backgrounds,
  decorations,
  decorations.pathmorphing, 
}

\tikzcdset{
  arrow style=tikz, 
  diagrams={
    trim left=
    {([xshift=5pt]current bounding box.west)},
    trim right=
    {([xshift=-5pt]current bounding box.east)},
    baseline=
    {([yshift=-2pt]current bounding box.center)},
    >={
      Computer Modern Rightarrow[
        length=4pt, width=4pt
      ]
    },
    hook/.style={
      {Hooks[right, length=2pt, width=8pt]}->
    },
    hook'/.style={
      {Hooks[left, length=2pt, width=8pt]}->
    },
    shorten=0pt,
  }
}

\definecolor{darkblue}{rgb}{0.05,0.25,0.65}
\definecolor{darkgreen}{RGB}{20,140,10}
\definecolor{lightgray}{rgb}{0.9,0.9,0.9}
\definecolor{darkorange}{RGB}{200,100,5}
\definecolor{darkyellow}{rgb}{.91,.91,0}
\definecolor{lightolive}{RGB}{225, 220, 185}

\let\originalsslash\sslash
\renewcommand{\sslash}{\mathord{\originalsslash}}

\makeatletter
\newcommand{\cpt}{\mathpalette\cpt@inner\relax}
\newcommand{\cpt@inner}[2]{%
  \scalebox{0.5}[0.9]{$#1\cup$}
  #1\{\infty\}
}
\makeatother

\makeatletter
\newcommand{\plus}{\mathpalette\sqcpt@inner\relax}
\newcommand{\sqcpt@inner}[2]{%
  \scalebox{0.5}[0.9]{$#1\sqcup$}
  #1\{\infty\}
}
\makeatother

\tikzset{
  snake left/.style={
    rounded corners,
    to path={
      let \p1 = (\tikztostart.east),
          \p2 = (\tikztotarget.west),
          \p3 = ($(\p1)!0.5!(\p2)$),
          \n1 = {8pt} 
      in
      (\p1)
      -- (\x1 + \n1, \y1)
      -- (\x1 + \n1, \y3)
      -- (\x2 - \n1, \y3) \tikztonodes
      -- (\x2 - \n1, \y2)
      -- (\p2)
    }
  }
}

\tikzset{
  uphordown/.style={
    rounded corners,
    to path={
      let \p1 = (\tikztostart.north),
          \p2 = (\tikztotarget.north),
          \n1 = {max(\y1,\y2) + 8pt}
      in
      (\p1)
      -- (\x1, \n1)
      -- (\x2, \n1) \tikztonodes 
      -- (\p2)
    }
  }
}

\tikzset{
  downhorup/.style={
    rounded corners,
    to path={
      let \p1 = (\tikztostart.south),
          \p2 = (\tikztotarget.south),
          \n1 = {min(\y1,\y2) - 8pt}
      in
      (\p1)
      -- (\x1, \n1)
      -- (\x2, \n1) \tikztonodes 
      -- (\p2)
    }
  }
}

\tikzset{
  rightvertleft/.style={
    rounded corners,
    to path={
      let \p1 = (\tikztostart.east),
          \p2 = (\tikztotarget.east),
          \n1 = {max(\x1,\x2) + 8pt}
      in
      (\p1)
      -- (\n1, \y1)
      -- (\n1, \y2) \tikztonodes 
      -- (\p2)
    }
  }
}

\tikzset{
  leftvertright/.style={
    rounded corners,
    to path={
      let \p1 = (\tikztostart.west),
          \p2 = (\tikztotarget.west),
          \n1 = {min(\x1,\x2) - 8pt}
      in
      (\p1)
      -- (\n1, \y1)
      -- (\n1, \y2) \tikztonodes 
      -- (\p2)
    }
  }
}

\newcommand{\inlinetikzcd}[1]{\begin{tikzcd}[sep=small, ampersand replacement=\&]#1\end{tikzcd}}

\newcommand{\defneq}{\equiv}

\newcommand{\ClassifyingA}{\mathcal{A}}
\newcommand{\ClassifyingB}{\mathcal{B}}

\newcommand
{\BHatSUTwoFibration}
{p}

\newcommand
{\HatESUTwoFibration}
{\wp}

\newcommand
{\NameEmail}[2]
{%
  \stackunder%
  {#1}%
  {%
   \clap{%
   \normalfont%
   \footnotesize%
   \texttt{#2}%
  }}%
  \hspace{-3.5pt}%
}

\begin{document}

\setlength{\abovedisplayskip}{3.5pt}
\setlength{\belowdisplayskip}{3.5pt}
\setlength{\abovedisplayshortskip}{-5pt}
\setlength{\belowdisplayshortskip}{3pt}

\title
{Flux Quantization of Type IIA
\\
in Unstable K-Theory}

\author[1]
{\NameEmail 
  {Pinak Banerjee}
  {pinakb24@vt.edu}
}

\author[2,3]
{\NameEmail
  {Hisham Sati}
  {hsati@nyu.edu}
}

\author[2]
{\NameEmail
  {Urs Schreiber}
  {us13@nyu.edu}
}

\affil[1]{Department of Physics, Virginia Tech, USA}
\affil[2]{Mathematics Program and Center for Quantum and Topological Systems (CQTS),  \newline New York University Abu Dhabi (NYUAD), UAE}
\affil[3]{The Courant Institute for Mathematical Sciences, New York University, New York, USA}

\maketitle

\keywords{
  flux quantization,
  supergravity,
  type IIA,
  M-theory,
  unstable K-theory,
  nonabelian cohomology
}

\MSCcodes{
  Primary:
  8E50,
  81T30,
  19L50;
  Secondary:
  55N20,
  55R35,
  55P62
}

\funding{
  by \textit{Tamkeen UAE} under the \textit{Abu Dhabi Research Institute Grant} \texttt{CG008}
}

\vspace{-2.5cm}

\begin{abstract}
  The traditional conjecture that RR-fluxes are quantized in stable K-cohomology fails to account for the presence of NS-brane sources: These impose nonlinear relations --- reductions of the famous quadratic relation on M-brane flux --- that can only be captured by unstable nonabelian cohomology theories. Here we consider a deformation of unstable K-theory which properly quantizes the fluxes coupling to D0/D2/NS5-branes, find a twisted version that quantizes also the fluxes coupling to NS1/D4-branes, and show that this oxidizes to a proper electromagnetic quantization of M-brane fluxes.
\end{abstract}

\medskip

\begin{center}
\begin{minipage}{9cm}
\tableofcontents
\end{minipage}
\end{center}

\newpage

\section
{The Flux Quantization}
\label
{OnTheChargeQuantization}

\paragraph
{The Nonlinearity in the IIA Bianchis}
The NS/RR-flux densities, 
$H_3$, $H_7 := \star H_3$ and $F_{2\bullet} = \star F_{10-2\bullet}$,
in duality-symmetric massive type IIA 10D supergravity are subject to the following Bianchi identities (cf. \cite[\S 22.1.3]{Ortin2015} after \parencites{GianiPernici1984}{HuqNamazie1985}{DallAgataEtAl1998}[\S 3]{CremmerJuliaLuPope1998}):
\begin{subequations}
\label{TypeIIABianchis}
  \begin{align}
    \label{TheF2BulletBianchi}
    \mathrm{d}\, F_{2\bullet}
    & =
    H_3 \wedge F_{2\bullet - 2}
    \\
    \label{TheH3Bianchi}
    \mathrm{d}\, H_3 
      & = 0
    \\
    \label{TheH7Bianchi}
    \mathrm{d}\, H_7 
      & = 
      \tfrac{1}{2} F_4 
        \wedge
      F_4
      -
      F_2 \wedge F_6
      + 
      F_0 \wedge F_8
      \mathrlap{\,.}
  \end{align}
\end{subequations}

In discussing possible flux/charge quantization laws \parencites{Freed2000}{SS25-Flux} for these relations, it has been tradition to ignore the last one \cref{TheH7Bianchi}. Granting that, the first relation \cref{TheF2BulletBianchi} has the form characterizing the Chern character image in twisted complex topological K-theory (cf. \parencites[\S 6.3]{BouwknegtEtAl2002}{MathaiStevenson2003}), with the twist being in ordinary 3-cohomology quantizing the $H_3$-flux in \cref{TheH3Bianchi}. 

This is essentially the origin of the famous conjecture that D/NS-brane charge is quantized in 3-twisted topological K-theory (\cite{GreenHarveyMoore1996,MinasianMoore1997,MooreWitten2000}, cf. \cite[\S 4.1]{SS25-Flux}). A lot of work has been done under this assumption (cf. \cite{Witten1998,DistlerFreedMoore2009,GrS22-RRFields,HosseiniTachikawaZhang2025}).

However, the last relation \cref{TheH7Bianchi} cannot actually be disregarded if one really means to speak about type IIA supergravity.
Notice in particular that: 
\begin{enumerate}
\item the standard K-theory conjecture means to quantize not only the magnetic ($F_{< 5}$) but also the electric ($F_{> 5}$) RR-flux densities, which highlights that there is no reason to disregard quantization of the electric NS flux $H_7$, much less ignore the relation it imposes on the other fluxes already rationally, even if not itself quantized;

\item
the nonlinear $H_7$ Bianchi identity \cref{TheH7Bianchi} is (cf. \parencites[\S 4.2]{MathaiSati2004}[\S 3]{FSS17-Sphere}) the direct image under dimensional reduction (cf. \cite[\S 22.1]{Ortin2015}) of the more widely appreciated nonlinearity in the Bianchi identity on the flux densities, $G_4$ and $G_7 := \star G_4$, in 11D supergravity (cf. \parencites[\S 3.1.3]{MiemiecSchnakenburg2006}[Thm. 3.1]{GSS24-SuGra} after \cite{CremmerJuliaScherk1978}):
\begin{subequations}
\label{TheCFieldBianchis}
\begin{align}
  \mathrm{d}\, G_4 & = 0
  \mathrlap{\,,}
  \\
  \label{TheG7Bianchi}
  \mathrm{d}\, G_7 & = \tfrac{1}{2} G_4 \wedge G_4
  \mathrlap{\,.}
\end{align}
\end{subequations}
\end{enumerate}
Hence if the global completion of 11D/10D supergravity by flux quantization is to be compatible with M/IIA duality, it must account for these nonlinearities \cref{TheH7Bianchi,TheG7Bianchi}.

\paragraph
{Proper Electromagnetic Flux Quantization}
In general, flux/charge quantization means \cite{SS25-Flux,SS24-Phase} to choose a generalized cohomology theory whose \emph{character map} \cite{FSS23-Char} image reproduces the given duality-symmetric Bianchi identities, and then to ask for charge preimages of the flux densities under this map, cf. \cref{FluxQuantizationSchematics} and \cref{GaussLawFromRelations}. 

Specifically, this applies to the spatial components of the flux densities on any Cauchy surface $X^d$ of spacetime $X^{1,d}$, where the Hodge duality partners become independent of each other and the Bianchi identities become the Gauss law constraints, cf. \cite[Thm. 2.2]{SS24-Phase}:

\begin{figure}[htb]
\caption{\label{FluxQuantizationSchematics}
  The \emph{flux/charge quantization} of higher gauge fields means to choose a generalized cohomology theory whose character map takes values in differential forms satisfying the given duality-symmetric Bianchi identities on spacetime $X^{1,3}$, or equivalently the Gauss law constraints on a Cauchy surface $X^d$. After this choice, the globally (topologically) completed field content consists, besides these flux densities, of a choice of cocycle in their essential character preimage.
}
\centering
\adjustbox{
  rndfbox=4pt,
  scale=1.23
}{
$
  \begin{tikzcd}[column sep=40pt]
    \substack{
      \text{Classifying}
      \\
      \text{space}
    }
    \arrow[r, rightsquigarrow] 
    &
    \substack{
      \text{Cohomology} 
      \\
      \text{theory}
    }
    \ar[
      r,
      "{
        \substack{
          \text{character}
        }
      }",
      "{
        \substack{
          \text{map}
        }
      }"{swap}
    ]
    &[20pt]
    \substack{
    \text{Duality-symmetric Bianchis on $X^{1,d}$}
    \\
    \text{ / Gauss law constraints on Cauchy $X^d$} 
    }
  \end{tikzcd}
$
}
\end{figure}

Traditionally this has been considered (cf. \cite{Freed2000}) for Whitehead-generalized abelian cohomology theories like ordinary cohomology, K-theory, elliptic cohomology, etc. In their twisted generalization these abelian cohomology theories capture \emph{linear} Bianchi identities like \cref{TheF2BulletBianchi} ---  but abelian cohomology theories can never (cf. \cite[Ex. 5.6]{FSS23-Char}) capture nonlinear Bianchi identities like \cref{TheH7Bianchi,TheG7Bianchi}.

Concretely in the case \cref{TypeIIABianchis} at hand, the preimages of the RR-flux densities $F_{2\bullet > 0}$ under the (Chern) character map are polynomials $\mathrm{ch}_\bullet$ in the \emph{Chern classes} $c_\bullet \in H^{2\bullet}\bracket({ \mathrm{KU}_0; \mathbb{Z} })$ on the stable classifying space for complex K-theory (cf. \parencites[\S V.3]{Karoubi1978}[\S I.2]{FreedEtAl2008}[Ex. 7.2]{FSS23-Char}):
\begin{equation}
\label{StableBU}
  \mathrm{KU}_0
  \simeq
  B \mathrm{U}(\infty) 
  \times
  \mathbb{Z}
    := 
  \bigcup_{n} B \mathrm{U}(n)
  \times
  \mathbb{Z}
  \mathrlap{\,.}
\end{equation}

Despite its superficial appearance, this space \cref{StableBU} is famously an infinite loop space (by homotopy-theoretic \emph{Bott periodicity}, cf. \parencites[\S B.2]{AguilarGitlerPrieto2002}[\S 15]{HusemollerEtAl2008}), which makes complex K-theory be an abelian cohomology theory (cf. \cite[Ex. 2.10]{FSS23-Char}). 

Hence if we are to account for the nonlinearity in \cref{TheH7Bianchi}, then we have to deform this space and break its infinite loop space property to obtain a classifying space of an unstable \emph{nonabelian cohomology} theory \cref{NonabelianCohomology} \parencites[Def. 6.0.6]{Toen2002}[Def. 6]{Lurie2014}[\S 2]{FSS23-Char} (more exposition in \parencites[\S 1]{SS25-TEC}[\S 4]{SS26-Orb}).

\paragraph
{D0/D2/NS5-sector and Unstable K-Theory}
To warm up to the idea of deforming this situation to account for the nonlinear Bianchi \cref{TheH7Bianchi}, consider first  the special case when only $F_2$, $F_4$ and $H_7$ (the fluxes coupling to D0/D2/NS5-branes) are non-zero (understanding all flux densities from now on as their spatial components on a Cauchy surface $X^d$). In this case, \cref{TypeIIABianchis} reduces to:
\begin{subequations}
\label{TheSimplifiedBianchis}
\begin{align}
  \label{BianchiOnF2F4}
  \mathrm{d}\, F_2 & = 0
  \,,\;
  \mathrm{d}\, F_4  = 0
  \mathrlap{\,,}
  \\
  \label{SimplifiedH7Bianchi}
  \mathrm{d}\, H_7 
    & = \tfrac{1}{2} F_4 \wedge F_4
    \mathrlap{\,.}
\end{align}
\end{subequations}
Taken at face value, the first relation \cref{BianchiOnF2F4} is flux-quantized by the differential nonabelian cohomology theory whose classifying space is the finite stage $B \mathrm{U}(2)$ of \cref{StableBU}. This may be understood as a form of differential \emph{unstable K-theory} (cf. \parencites[Ex. 1.18]{SS25-Complete}[Prop. 9.4]{FSS23-Char}), following the terminology of ``unstable K-theory'' from \parencites[\S 1.2]{vanderKallen1980}, cf. \parencites{HamanakaKono2003}{Hamanaka2003}{HamanakaKono2004}[\S 1.3]{ClausenJansen2024}.

For comparison: On globally hyperbolic 10D spacetimes, the stable range for unstable K-theory is that classified by $B \mathrm{U}(n > 4)$. Hence while $B \mathrm{U}(2)$ is ``very unstable'' in absolute terms, for 10D spacetimes it is halfway through the unstable range.

To obtain from this a classifying space $B \widehat {\mathrm{U}(2)}$ whose induced flux quantization is compatible also with the nonlinearity \cref{SimplifiedH7Bianchi}, the canonical step is to pass to the \emph{homotopy fiber} of the map classifying $(c_2)^2 := c_2 \cup c_2$, making a homotopy fiber sequence \cref{ClassifyingMapForC2Square}:
\begin{equation}
\label
{TheSpaceBHatU2}
  \begin{tikzcd}
      B 
      \widehat{\mathrm{U}(2)}
    \ar[
      rr,
      "{
        \mathrm{hofib}\scaledbracket({
          (c_2)^2
        })
      }"
    ]
    &&
    B \mathrm{U}(2)
    \ar[r, "{ (c_2)^2 }"]
    &
    K\bracket({\mathbb{Z},8})
    \mathrlap{\,.}
  \end{tikzcd}
\end{equation}
(Here $\widehat{\mathrm{U}(2)}$ itself may be understood as a Lie 6-group which as such is a cousin of the \emph{MFivebrane 6-group} of \cite[Ex. 3.2]{FSS21-Hopf}, following terminology of \cite{SSS12,SSSt09-Fivebrane}.)

Hence the nonabelian differential cohomology theory (cf. \cite[\S\S 2,9]{FSS23-Char}) classified by this $B \widehat{\mathrm{U}(2)}$ --- being an admissible flux quantization law for \cref{TheSimplifiedBianchis}, cf.  \cref{MinSullivanModelOfBhatU2} --- is a deformation of unstable topological K-theory that properly takes into account the nonlinear Bianchi identity \cref{TheH7Bianchi} for the NS flux $H_7$ in the case that $F_6 = 0$. 

\paragraph
{Proper Flux Quantization of Type IIA and Twisted Unstable K-theory}
After this warmup exercise for going beyond abelian cohomology theories, we next ask for a comprehensive flux quantization of \cref{TypeIIABianchis} and a compatible lift to M-theory quantizing also \cref{TheCFieldBianchis}.
On this second point, we highlight (with \cite{BMSS2019}) the open secret that the derivation of massive type IIA from 11D is elusive, in that what one really gets by double dimensional reduction of the 11D SuGra fluxes \cref{TheCFieldBianchis} is (cf. again \parencites[\S 4.2]{MathaiSati2004}[\S 3]{FSS17-Sphere})
only the restriction of \cref{TypeIIABianchis} to:
\begin{enumerate}
\item $F_2$ (the class of the 11D circle bundle), 
\item $F_4$ (the direct image of $G_4$),
\item $F_6$ (the reduced image of $G_7$),
\item $H_3$ (the reduced image of $G_4$)
\item $H_7$ (the direct image of $G_7$):
\end{enumerate}
\begin{subequations}
\label
{TheActualBianchisIn10D}
\begin{align}
  \mathrm{d}
  \, 
  F_2 & = 0
  ,\;
  \mathrm{d}\, F_4 
   = H_3 \wedge F_2
  ,\;
  \mathrm{d}\, F_6 
  = 
    H_3 \wedge F_4
  \\
  \mathrm{d}\, H_3 & = 0
  \\
  \mathrm{d}\, H_7 & =
  \tfrac{1}{2}
  F_4 \wedge F_4 
    - 
  F_2 \wedge F_6
  \mathrlap{\,.}
\end{align}
\end{subequations}

This being the exact reduction of the 11D Bianchis \cref{TheCFieldBianchis}, we may appeal to the following theorem (cf. \parencites[\S 2.2]{BMSS2019}{SS24-Cyc}, going back to \cite[\S 3]{FSS18-TD}): For $\ClassifyingB$ any classifying space that quantizes the fluxes in 11D, its \emph{cyclic loop space}
\begin{equation}
\label
{Cyclification}
  \mathrm{Cyc}\bracket({
    \ClassifyingB
  })
  :=
  \bracket({L\ClassifyingB})
  \sslash
  S^1
\end{equation}
is an admissible flux quantization for the resulting Bianchis \cref{TheActualBianchisIn10D}.

But we show in \cref{OnTheProofs} (\cref{HatBSU2AdmissibleClassifyingSpace}) that an admissible choice for $\ClassifyingB$ in 11D is 
\begin{equation}
\label
{TheSpaceHatBSU2}
  B \widehat{\mathrm{SU}(2)}
  :=
  \mathrm{hofib}\Big(\! {
    \begin{tikzcd}
      B \mathrm{SU}(2)
      \ar[
        r,
        "{ (c_2)^2 }"
      ]
      &
      K\bracket({
        \mathbb{Z},
        8
      })
    \end{tikzcd}
  } \!\Big)
  \mathrlap{.}
\end{equation}
This implies (\cref{RationalModelOfCyclifiedSpace}) that the space 
\begin{equation}
\label
{TheCyclifiedClassifyingSpace}
  \mathrm{Cyc}\bracket({
    B \widehat{\mathrm{SU}(2)}
  })
\end{equation}
classifies an admissible flux quantization for the type IIA fluxes \cref{TheActualBianchisIn10D}. 

The universal property of the $\mathrm{Cyc}$-construction then implies (\cref{TheExtCycAdjunct}) that there is a canonical comparison map:
\begin{equation}
\label
{TheAdjunctMapIntoCycBHatSU2}
  \begin{tikzcd}
    B \widehat{\mathrm{U}(2)}
    \ar[r]
    &
    \mathrm{Cyc}\bracket({
      B \widehat{\mathrm{SU}(2)}
    })
    \mathrlap{\,,}
  \end{tikzcd}
\end{equation}
which reflects (\cref{PullbackAlongExtCycAdjunct}) exactly the lift to flux-quantized fields of the inclusion of the restricted situation \cref{BianchiOnF2F4} into the full situation \cref{TheActualBianchisIn10D}. In view of the previous discussion, we may hence think of \cref{TheCyclifiedClassifyingSpace} as classifying a deformed and \emph{unstable} version of \emph{3-twisted K-theory}.

This shows how the traditional conjecture, that D-brane charge is quantized in stable K-theory, may be corrected to account for traditionally ignored nonlinearities in the Bianchi identities, by passage to a deformed version of unstable K-theory, and how after this correction the flux-quantized situation is actually a reduction of a corresponding situation in 11D, as it should be.

\paragraph
{Relation to Hypothesis H}
Beware here that in saying so we are using a slightly different flux quantization in 11D than was considered in \cite{FSS20-H,FSS21-Hopf} following \cite[\S 2.5]{Sati2018}, where the classifying space in 11D was instead taken to be $S^4$, quantizing the C-field fluxes in 4-Cohomotopy (``Hypothesis H''). 

But the two are closely related and in fact indistinguishable in compactifications to $\mathbb{R}^{1,3}$: There is a canonical comparison map
\begin{equation}
\label{The7Equivalence}
  \begin{tikzcd}
    S^4
    \ar[
      r,
      "{
        \sim_{\leq 7}
      }"
    ]
    &
    B \widehat{\mathrm{SU}(2)}
    \mathrlap{\,,}
  \end{tikzcd}
\end{equation}
which is an integral homotopy equivalence up to dimension 7 (this is our main \cref{ComparisonMapIs7Equivalence}),
and hence in particular a rational homotopy equivalence (\cref{ComparisonMapIsRationalEquivalence}).

This highlights once more that the global completion of higher gauge theories by flux quantization is a choice which needs to be justified by checking that it implies desired or expected consequences.

The choice $\ClassifyingB \defneq S^4$ stands out in that it provably implies core topological effects expected in ``M-theory'' (cf. \cite{Duff1999World}), such as the $\tfrac{1}{4}p_1$-shifted flux quantization of $G_4$ \cite[Prop. 3.13]{FSS20-H} and anomaly cancellations of the M5-brane \parencites[Thm. 4.8]{FSS21-Hopf}{SS21-M5Anomaly}. 
These results are unlikely to carry over to $\ClassifyingB \defneq B \widehat{\mathrm{SU}(2)}$, notably in that the relevant tangential twisting by $\mathrm{Spin}(5)$-structure no longer applies. 

However, we may turn this around: The 7-equivalence \cref{The7Equivalence} shows that when the tangential twist is trivial (such as for compactification on tori or other group manifolds) and the compactified space is of dimension $\leq 7$, then the cohomotopical flux quantization with $\ClassifyingB \defneq S^4$ becomes equivalent to that with $\ClassifyingB \defneq B \widehat{\mathrm{SU}(2)}$ and as such the deformed unstable K-theoretic nature of its dimensional reduction follows. 

These equivalences in low dimension may hence serve to relate different hypotheses about flux quantization to each other. Compare also the fact that the stabilization (homotopy linearization) of $S^4$ is equivalent to the classifying space for \emph{topological modular forms} ($\mathrm{tmf}$) in degree 4, over 10-manifolds (\parencites[\S 4.3]{Hopkins2002}[Ex. 7.5]{FSS23-Char}):
\begin{equation}
  \begin{tikzcd}
    \Sigma^\infty S^4
    \ar[
      r,
      "{ \sim_{\leq 10} }"
    ]
    &
    \Sigma^4 \mathrm{tmf}
    \mathrlap{\,.}
  \end{tikzcd}
\end{equation}

\paragraph
{Including Worldvolume Flux on D4/M5}

More generally, in the presence of M5-brane probes (\parencites{HoweSezgin1997}{HoweSezginWest1997}{AganagicEtAl1997}[\S 5.2]{Sorokin2000}[\S 3]{GSS25-M5}) of the 11D SuGra bulk along an embedding
\begin{equation}
 \begin{tikzcd}
   \Sigma^{1,5}
   \ar[
     r,
     hook,
     "{ \phi }"
   ]
   &
   X^{1,10}
   \mathrlap{\,,}
 \end{tikzcd}
\end{equation}
there famously is a (nonlinearly) self-dual flux density $\mathscr{H}_3$ on the worldvolume $\Sigma^{1,5}$ (cf. \cite[Rem. 3.19]{GSS25-M5}) satisfying, on top of \cref{TheCFieldBianchis}, the further worldvolume Bianchi identity (cf. \cite[Prop. 3.18]{GSS25-M5}):
\begin{equation}
\label
{BianchiOnM5Probe}
  \mathrm{d}\, \mathscr{H}_3
  =
  \phi^\ast G_4
  \mathrlap{\,.}
\end{equation}
Equivalently, after double dimensional reduction \cite{DuffHoweInamiStelle1987} of the M5 to a D4-probe of type IIA (cf. \parencites{Townsend1996}[\S 6]{AganagicEtAl1997}[\S 6]{AganagicEtAl1997b}), 
\begin{equation}
\begin{tikzcd}
  \Sigma^{1,4}
  \ar[
    r,
    hook,
    "{ \varphi }"
  ]
  &
  X^{1,9}
  \mathrlap{\,,}
\end{tikzcd}
\end{equation}
this becomes (cf. \parencites{Stefanski2001}[\S 3]{Zhou2001}[\S 2]{Banerjee2026M5brane}):
\begin{subequations}
\label
{TheActualBianchisOnD4}
\begin{align}
  \mathrm{d}\, \mathscr{F}_2 
  & = 
  \varphi^\ast H_3
  \\
  \mathrm{d}\, \mathscr{H}_3
  & =
  \varphi^\ast F_4 
    -
  \varphi^\ast F_2 
    \wedge
  \mathscr{F}_2
  \mathrlap{\,,}
\end{align}
\end{subequations}
where $\mathscr{F}_2$ is the twisted flux density of the D4-worldvolume Chan-Paton gauge field and $\mathscr{H}_3 = \star_{_{1,4}} \mathscr{F_2}$ its dual electric flux density.

Flux quantization of such \emph{twisted relative} Bianchi identities (namely twisted by the bulk fluxes relative to the worldvolume embedding) happens in \emph{twisted relative nonabelian cohomology theories} (\parencites[\S 3.2]{BaSS26-MString}[\S 2]{SS26-Orb}[\S 3]{FSS23-Char}), where classifying spaces $\ClassifyingB$ are enhanced to classifying fibrations $\inlinetikzcd{ \ClassifyingA \ar[r, ->>] \& \ClassifyingB }$:
\begin{equation}
  H^0\bracket({
    \phi;
    \wp
  })
  :=
  \pi_0
  \,
  \mathrm{Map}\left(
    \begin{tikzcd}[
      column sep=15pt
    ]
      \Sigma^{1,5}
      \ar[r, dashed]
      \ar[
        d,
        hook,
        "{ \phi }"
      ]
      &
      \ClassifyingA
      \ar[
        d,
        ->>,
        "{ \wp }"
      ]
      \\
      X^{1,10}
      \ar[
        r,
        dashed
      ]
      &
      \ClassifyingB
    \end{tikzcd}
  \right)
  \mathrlap{.}
\end{equation}

The minimal (in number of CW-cells) admissible choice of flux quantization $\wp$ for higher gauge sector of M5@11Bulk turns out to be (\parencites[\S 3.7]{FSS20-H}{FSS21-Hopf}{FSS21-StrStruc}) the quaternionic Hopf fibration \cref{TheQuaternionicHopfFibration} $ \inlinetikzcd{ S^7 \ar[r, "{ h_{\mathbb{H}} }"] \& S^4 }$, which gives the flux quantization of 11D SuGra with M5-probes according to \emph{Hypothesis H}.

Hence, in view of the above discussion, we are now to ask whether the comparison map \cref{The7Equivalence} to the alternative K-flavored flux quantization law extends to a 7-equivalence of classifying fibrations. 
Indeed, we find a canonical $\mathrm{SU}(2)$-fiber bundle \cref{FibrationOfHattedSpaces} over $B \widehat{\mathrm{SU}(2)}$ equipped with compatible comparison maps \cref{ComparisonMapFromS7} which are integral homotopy equivalences in dimensions $\leq 7$ (\cref{ComparisonOfFibrationsIs7Equivalence}) and hence are in particular rational homotopy equivalences (\cref{FiberComparisonIsRationalEquivalence}):
\begin{equation}
  \begin{tikzcd}[row sep=15pt, column sep=large]
    S^7
    \ar[
      r,
      "{
        \sim_{\leq 7}
      }"
    ]
    \ar[
      d,
      ->>,
      "{ h_{\mathbb{H}} }"
    ]
    &
    B^7 \mathbb{Z}
    \ar[
      d,
      ->>,
      "{ \wp }"
    ]
    \\
    S^4
    \ar[
      r,
      "{
        \sim_{\leq 7}
      }"
    ]
    &
    B \widehat{\mathrm{SU}(2)}
    \mathrlap{\,.}
  \end{tikzcd}
\end{equation}

\begin{figure}[htb]
\caption{\label{Overview}
  \textbf{Overview for IIA.} On the bottom left the bulk type IIA Bianchi identities \cref{TheActualBianchisIn10D} that appear via double dimensional reduction from 11D. On the top left the further Bianchis \cref{TheActualBianchisOnD4} on a D4-probe embedding $\varphi$. On the right the corresponding classifying maps of admissible flux quantization of this situation, first according to Hypothesis H and, on the far right, according to the more K-flavored flux quantization law established here.
  This dimensionally oxidizes to flux quantizations of 11D SuGra surveyed in \cref{Overview11D}.
}
\centering
\adjustbox{
  rndfbox=4pt
}{
$
  \begin{aligned}
  \mathrm{d}\, \mathscr{F}_2 
  & = 
  \varphi^\ast H_3
  \\
  \mathrm{d}\, \mathscr{H}_3
  & =
  \varphi^\ast F_4 
    -
  \varphi^\ast F_2 
    \wedge
  \mathscr{F}_2
  \\[+5pt]
  \mathrm{d}
  \, 
  F_2 & = 0
  \\[-2pt]
  \mathrm{d}\, F_4 
   &= H_3 \wedge F_2
  ,\;
  \mathrm{d}\, F_6 
  = 
    H_3 \wedge F_4
  \\[-2pt]
  \mathrm{d}\, H_3 & = 0
  \\[-2pt]
  \mathrm{d}\, H_7 & =
  \tfrac{1}{2}
  F_4 \wedge F_4 
    - 
  F_2 \wedge F_6
  \mathrlap{\,.}
  \end{aligned}
  \hspace{.8cm}
  \begin{tikzcd}[
    row sep=45pt
  ]
    \Sigma^{1,4}
    \ar[
      d,
      hook,
      "{ \varphi }"
    ]
    \ar[
      r,
      dashed
    ]
    &
    \mathrm{Cyc}\bracket({
      S^7
    })
    \ar[
      r,
      "{
         \mathrm{Cyc}\bracket({
           \hat h_{\mathbb{H}}
         })
      }"
    ]
    \ar[
      d,
      ->>,
      "{ 
        \mathrm{Cyc}(h_{\mathbb{H}}) 
      }"{description}
    ]
    &
    \mathrm{Cyc}\bracket({
      B^7 \mathbb{Z}
    })
    \ar[
      d,
      ->>,
      "{
        \mathrm{Cyc}({
          \wp
        })
      }"{description}
    ]
    \\
    X^{1,9}
    \ar[
      r,
      dashed
    ]
    &
    \mathrm{Cyc}\bracket({
      S^4
    })
    \ar[
      r,
      "{
        \mathrm{Cyc}\bracket({
          \hat 1
        })
      }"
    ]
    &
    \mathrm{Cyc}\bracket({
      B \widehat{\mathrm{SU}(2)}
    })
  \end{tikzcd}
$
}
\end{figure}

As before, by double dimensional reduction of flux quantization laws via cyclification \cref{Cyclification} of classifying spaces \parencites[\S 2.2]{BMSS2019}, it follows at once that the cyclified fibration
\begin{equation}
  \begin{tikzcd}[row sep=15pt]
    \mathrm{Cyc}\bracket({
      B^7 \mathbb{Z}
    })
    \ar[
      d,
      ->>,
      "{
        \mathrm{Cyc}(\wp)
      }"
    ]
    \\
    \mathrm{Cyc}\bracket({
      B 
      \widehat{\mathrm{SU}(2)}
    })
  \end{tikzcd}
\end{equation}
classifies a twisted relative flux quantization law for type IIA with D4-brane probes (cf. \cite[\S 6]{AganagicEtAl1997b}). The situation is summarized in \cref{Overview}.

\begin{figure}[htb]
\caption{\label{Overview11D}
  \textbf{Overview for 11D.} On the bottom left the Bianchi identities in the bulk of 11D SuGra \cref{TheCFieldBianchis}; on the top left the further Bianchi on an M5 probe \cref{BianchiOnM5Probe}. On the right the corresponding classifying maps of admissible flux quantization of this situation, first according to Hypothesis H and then, on the far right, according to the more K-flavored flux quantization law established here. Under double dimensional reduction this yields exactly the situation in \cref{Overview}.
}
\centering
\adjustbox{
  rndfbox=4pt
}{
$
  \hspace{.5cm}
  \begin{aligned}
    \mathrm{d}\, \mathscr{H}_3
    & =
    \phi^\ast G_4
    \\[15pt]
    \mathrm{d}\, 
    G_4 & = 0
    \\
    \mathrm{d}\, G_7
    & = 
    \tfrac{1}{2}
    G_4 \wedge G_4
  \end{aligned}
  \hspace{1.6cm}
  \begin{tikzcd}[
    row sep=25pt
  ]
    \Sigma^{1,5}
    \ar[
      d,
      "{ \phi }"
    ]
    \ar[
      r,
      dashed
    ]
    &
    S^7
    \ar[
      r,
      "{
        \hat h_{\mathbb{H}}
      }"
    ]
    \ar[
      d,
      "{
        h_{\mathbb{H}}
      }"{description}
    ]
    &
    B^7 \mathbb{Z}
    \ar[
      d,
      "{
        \wp
      }"{description}
    ]
    \\
    X^{1,10}
    \ar[
      r,
      dashed
    ]
    &
    S^4
    \ar[
      r,
      "{ \hat 1 }"
    ]
    &
    B\widehat{\mathrm{SU}(2)}
  \end{tikzcd}
  \hspace{.5cm}
$
}
\end{figure}

\section
{The Proofs}
\label{OnTheProofs}

We freely use tools from algebraic topology and homotopy theory, as found in textbooks such as \cite{Whitehead1978,James1984,Massey1991,Kochman1996,Hatcher2002,AguilarGitlerPrieto2002,Arkowitz2011,Fomenko2016}. In particular we use the homotopy long exact sequences (LES) induced by homotopy fibrations (cf. \cite[\S 9.8]{Fomenko2016}); for concise review in our context cf. \cite[\S A.3]{SS26-BBC}.

\paragraph
{Eilenberg-MacLane spaces.}
For $A$ a discrete abelian group and $n \in \mathbb{N}$, we write $B^n A := K\bracket({A,n})$ for the \emph{Eilenberg-MacLane space} of $A$ in degree $n$ (cf. \cite[\S 6]{AguilarGitlerPrieto2002}), characterized uniquely (up to weak homotopy equivalence) by the property
\begin{equation}
\label
{HomotopyGroupsOfEMSpace}
  \pi_k\, B^n A
  =
  \begin{cases}
    A & \text{if }  k = n
    \\
    \text{trivial} & \text{otherwise}
    \,.
  \end{cases}
\end{equation}
This classifies ordinary cohomology (cf. \parencites[\S 7.1]{AguilarGitlerPrieto2002}[Ex. 2.1]{FSS23-Char}), in that cohomology classes correspond to homotopy classes of classifying maps:
\begin{equation}
\label
{OrdinaryCohomologyClassified}
  H^n\bracket({
    X;
    A
  })
  \simeq
  \pi_0
  \,
  \mathrm{Map}\bracket({
    X,
    B^n A
  })
  \mathrlap{\,.}
\end{equation}

This classical fact is the archetype for the definition of generalized \emph{nonabelian cohomology} \parencites[Def. 6]{Lurie2014}[\S 2]{FSS23-Char}, defined for \emph{any} (pointed) space $\ClassifyingA$ (its \emph{classifying space}) as
\begin{equation}
\label
{NonabelianCohomology}
  H^0\bracket({
    X;
    \ClassifyingA
  })
  :=
  \pi_0
  \,\mathrm{Map}\bracket({
    X, \ClassifyingA
  })
  \mathrlap{\,.}
\end{equation}

\paragraph
{Quaternionic Projective Spaces}
We recall that $S^4$ is homeomorphically the quaternionic projective line,  
while $B \mathrm{SU}(2)$ is the infinite projective space (cf. \cite[Rem. 3.88]{SS23-Mf}), and that under this identification the canonical inclusion map represents the generator $1 \in \mathbb{Z} \simeq \pi_4\bracket(B{ \mathrm{SU}(2) })$:
\begin{equation}
\label{UnitMapFromS4ToBSU2}
\begin{tikzcd}[
  row sep=6pt
]
  S^4
  \ar[
    rrr,
    "{ 1 }"
  ]
  \ar[
    d,
    phantom,
    "{ \simeq }"{rotate=-90}
  ]
  &&&
  B \mathrm{SU}(2)
  \ar[
    d,
    phantom,
    "{ \simeq }"{rotate=-90}
  ]
  \\
  \mathbb{H}P^1
  \ar[
    r,
    hook,
    "{ \iota_1 }"
  ]
  &
  \mathbb{H}P^2
  \ar[
    r, 
    hook,
    "{ \iota_2 }"
  ]
  &
  \cdots
  \ar[r, hook]
  &
  \mathbb{H}P^\infty
  \mathrlap{\,.}
\end{tikzcd}
\end{equation}
Concretely, $\mathbb{H}P^2$ is obtained from $\mathbb{H}P^1$ by attaching an 8-cell along the quaternionic Hopf fibration $h_{\mathbb{H}}$ \cref{TheQuaternionicHopfFibration}:
\begin{equation}
\label
{CellAttachmentForHP2}
  \begin{tikzcd}[row sep=15pt, column sep=large]
    S^7
    \ar[
      r,
      "{ h_{\mathbb{H}} }"
    ]
    \ar[
      d,
      hook
    ]
    \ar[
      dr,
      phantom,
      "{ \ulcorner }"{pos=.9}
    ]
    & 
    \mathbb{H}P^1
    \ar[
      d, 
      hook,
      "{ \iota_1 }"
    ]
    \\
    D^8
    \ar[
      r,
    ]
    & 
    \mathbb{H}P^2
    \mathrlap{\,.}
  \end{tikzcd}
\end{equation}
Next, $\mathbb{H}P^3$ is obtained by adding a further 12-cell and so on.

In view of this cell decomposition, one finds that:
\begin{enumerate}
\item 
the map $\inlinetikzcd{\mathbb{HP}^2 \ar[r, hook] \&  \mathbb{H}P^\infty }$ induces isomorphisms on homotopy up to dimension 10 (by the cellular approximation theorem, cf. \cite[Thm. 4.8]{Hatcher2002})
\begin{equation}
\label
{HomotopyOfHP2InsideHPInfty}
  \begin{tikzcd}
    \pi_{\leq 10}
    \, 
    \mathbb{H}P^2
    \ar[
      r,
      "{ \sim }"
    ]
    &
    \pi_{\leq 10}
    \,
    \mathbb{H}P^\infty
    \mathrlap{\,,}
  \end{tikzcd}
\end{equation}
\item 
the integral cohomology ring of quaternionic complex projective spaces is (as in \cite[Thm. 3.19]{Hatcher2002}) freely generated by a degree=4 generator, which under the above equivalence we may address as the second Chern class $c_2$ (as in  \cite[\S 14]{Milnor1974}):
\begin{equation}
\label
{IntegralCohomologyOfBSU2}
  c_2 
  \in
  H^4\bracket({
    B \mathrm{SU}(2);
    \mathbb{Z}
  })
  \subset
  H^\bullet\bracket({
    B \mathrm{SU}(2);
    \mathbb{Z}
  })  
  \simeq
  \mathbb{Z}[c_2]
  \mathrlap{\,.}
\end{equation}
(alternatively, under the further homeomorphism $\mathrm{SU}(2) \simeq \mathrm{Spin}(3)$, as the first fractional Pontrjagin class $\tfrac{1}{2}p_1$). 

Using the same symbol also for the pullback of this generator to any finite stage, we have the cohomology rings
\begin{equation}
  H^\bullet\bracket({
    \mathbb{H}P^n
    ;
    \mathbb{Z}
  })
  \simeq
  \mathbb{Z}[{
    c_2
  }]\big/\bracket({
    (c_2)^{n+1}
  })
  \mathrlap{\,.}
\end{equation}
In particular (and trivially so):
\begin{equation}
\label
{IntegralCohomologyRingOfS4}
  H^4\bracket({
    S^4;
    \mathbb{Z}
  })
  \simeq
  \mathbb{Z}[c_2]/\bracket({
    (c_2)^2
  })
  \mathrlap{\,.}
\end{equation}
\end{enumerate}

\paragraph
{The 6-Group Classifying Spaces}
By \cref{OrdinaryCohomologyClassified} the squared universal second Chern class has a classifying map, which we denote by the same symbol:
\begin{equation}
\label
{ClassifyingMapForC2Square}
  \begin{tikzcd}
    B \mathrm{SU}(2)
    \ar[r]
    \ar[
      rrr,
      downhorup,
      "{ 
        (c_2)^2
      }"{description}
    ]
    &
    B \mathrm{U}(2)
    \ar[
      rr,
      "{
        (c_2)^2
      }"
    ]
    &&
    B^7
    \mathbb{Z}
    \mathrlap{\,.}
  \end{tikzcd}
\end{equation}
The homotopy fibers of these maps are our hatted spaces $B \widehat{\mathrm{U}(2)}$ \cref{TheSpaceBHatU2} and $B \widehat{\mathrm{SU}(2)}$ \cref{TheSpaceHatBSU2}, which thereby come with a canonical comparison map:
\begin{lemma}
  This canonical map 
  \begin{equation}
    \label{MapFromBHatSU2ToBhatU2}
    \begin{tikzcd}
     B \widehat{\mathrm{SU}(2)}
     \ar[r]
     &
     B \widehat{\mathrm{U}(2)}
  \end{tikzcd}
  \end{equation}
  is a homotopy $\mathrm{U}(1)$-fibration.
\end{lemma}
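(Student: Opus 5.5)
Our plan is to reduce the statement to the classical fact that $B\mathrm{SU}(2) \to B\mathrm{U}(2)$ is a homotopy $\mathrm{U}(1)$-fibration, and then pass to homotopy fibers over $B^7\mathbb{Z}$.

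First, we recall that the determinant homomorphism $\det : \mathrm{U}(2) \to \mathrm{U}(1)$ has kernel $\mathrm{SU}(2)$. This gives a homotopy fiber sequence
\[
  B\mathrm{SU}(2) \longrightarrow B\mathrm{U}(2) \xrightarrow{\;B\det\;} B\mathrm{U}(1) \,.
\]
Rotating it once yields the homotopy fiber sequence
\[
  \mathrm{U}(1) \simeq \Omega B\mathrm{U}(1) \longrightarrow B\mathrm{SU}(2) \longrightarrow B\mathrm{U}(2) \,,
\]
so the inclusion $B\mathrm{SU}(2)\to B\mathrm{U}(2)$ has homotopy fiber $\mathrm{U}(1)$. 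It is in fact a genuine principal bundle, modelled as $E\mathrm{U}(2)/\mathrm{SU}(2) \to E\mathrm{U}(2)/\mathrm{U}(2)$, with fiber $\mathrm{U}(2)/\mathrm{SU}(2)\cong \mathrm{U}(1)$.

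Next, we use that by \cref{ClassifyingMapForC2Square} the map $(c_2)^2 : B\mathrm{SU}(2)\to B^7\mathbb{Z}$ factors, up to homotopy, as the composite of the inclusion with $(c_2)^2 : B\mathrm{U}(2)\to B^7\mathbb{Z}$. This holds because the universal $c_2$ on $B\mathrm{U}(2)$ pulls back to the generator $c_2$ of \cref{IntegralCohomologyOfBSU2}, and squaring commutes with pullback. We then choose one homotopy realizing this factorization and define the comparison map \cref{MapFromBHatSU2ToBhatU2} as the induced map on homotopy fibers over $B^7\mathbb{Z}$.

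The key step is the pasting law for homotopy pullbacks. The square
\[
  \begin{tikzcd}
    B\widehat{\mathrm{SU}(2)} \ar[r] \ar[d] & B\widehat{\mathrm{U}(2)} \ar[d] \\
    B\mathrm{SU}(2) \ar[r] & B\mathrm{U}(2)
  \end{tikzcd}
\]
is a homotopy pullback. To see this, note that the right column composed with the pullback square exhibiting $B\widehat{\mathrm{U}(2)}$ over $B^7\mathbb{Z}$ gives the pullback square defining $B\widehat{\mathrm{SU}(2)}$, because both are homotopy fibers of the same composite to $B^7\mathbb{Z}$. Pasting then identifies the left square as a homotopy pullback.

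Consequently, the homotopy fiber of $B\widehat{\mathrm{SU}(2)}\to B\widehat{\mathrm{U}(2)}$ over any point equals the homotopy fiber of $B\mathrm{SU}(2)\to B\mathrm{U}(2)$ over its image, namely $\mathrm{U}(1)$. Equivalently, composing with the first step, $B\widehat{\mathrm{SU}(2)}$ is the homotopy fiber of
\[
  B\widehat{\mathrm{U}(2)} \longrightarrow B\mathrm{U}(2) \xrightarrow{\;B\det\;} B\mathrm{U}(1) \,.
\]
This exhibits the map as a principal $\mathrm{U}(1)$-fibration, classified by $c_1$ pulled back to $B\widehat{\mathrm{U}(2)}$. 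If a strict fibration is wanted, we pull back the principal bundle $E\mathrm{U}(2)/\mathrm{SU}(2)\to B\mathrm{U}(2)$ along $B\widehat{\mathrm{U}(2)}\to B\mathrm{U}(2)$.

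The only delicate point is making sure the comparison map is the one induced by a compatible homotopy, so that the pasting argument applies. We handle this by modelling $B^7\mathbb{Z}$ as a topological abelian group and choosing the cocycle on $B\mathrm{SU}(2)$ to be literally the restriction of the one on $B\mathrm{U}(2)$. Then the triangle commutes strictly, and the canonical map between homotopy fibers is the induced one.
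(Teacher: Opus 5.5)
Your argument is correct and is essentially the paper's. The paper likewise starts from $\mathrm{fib}\big(B\mathrm{SU}(2)\to B\mathrm{U}(2)\big)\simeq \mathrm{U}(1)$ and then uses a $3\times 3$ diagram of homotopy fiber sequences over $B^8\mathbb{Z}$ with the commutation of homotopy limits, which has the same content as your pasting-law argument; your identification of the map as the principal bundle classified by the pulled-back $c_1$ is a nice extra. One notational slip: $(c_2)^2$ is classified by a map to $B^8\mathbb{Z}$, not $B^7\mathbb{Z}$ (the latter is the homotopy fiber), a typo you inherited from the paper's display of the classifying map.
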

\begin{proof}
First, we note that the unhatted map is already a homotopy $\mathrm{U}(1)$-fibration (cf. \cite[Lem. 2.7(b)]{FSS20-H}):
\begin{equation}
\label
{CircleAsHomotopyFiberOfBSU2ToBU2}
  \mathrm{fib}\big(
    \inlinetikzcd{
      B \mathrm{SU}(2)
      \ar[r]
      \&
      B \mathrm{U}(2)
     }
   \big)
   \simeq
   \mathrm{U}(2)/\mathrm{SU}(2)
   \simeq
   \mathrm{U}(1)
   \mathrlap{\,.}
\end{equation}
Now, the fiber in question may be understood as the homotopy fiber $X$ in the left vertical sequence of the following homotopy-commuting diagram
\begin{equation}
  \begin{tikzcd}[row sep=small]
    X
    \ar[r]
    \ar[d]
    &
    \mathrm{U}(1)
    \ar[d]
    \ar[r]
    &
    \ast
    \ar[d]
    \\
    B \widehat{\mathrm{SU}(2)}
    \ar[r]
    \ar[d]
    &
    B \mathrm{SU}(2)
    \ar[
      r,
      "{
        (c_2)^2
      }"
    ]
    \ar[d]
    &
    B^8 \mathbb{Z}
    \ar[
      d,
      equals
    ]
    \\
    B \widehat{\mathrm{U}(2)}
    \ar[r]
    &
    B \mathrm{U}(2)
    \ar[
      r,
      "{
        (c_2)^2
      }"
    ]
    &
    B^8 \mathbb{Z}
    \mathrlap{\,.}
  \end{tikzcd}
\end{equation}
Here the middle and bottom rows are homotopy fiber sequences by definition, and the middle and right columns are homotopy fiber sequences by \cref{CircleAsHomotopyFiberOfBSU2ToBU2}. But since homotopy limits commute with each other, it follows that:
  \begin{align*}
    X
    & 
    \defneq
    \mathrm{fib}\Big(
      \inlinetikzcd{
        B \widehat{\mathrm{SU}(2)}
        \ar[r]
        \&
        B \widehat{\mathrm{U}(2)}
      }
    \Big)
    \\
    & \simeq
    \mathrm{fib}\big(
      \inlinetikzcd{
        \mathrm{U}(1)
        \ar[r]
        \&
        \ast
      }
    \big)
    \\
    & 
    \simeq
    \mathrm{U}(1)
    \mathrlap{\,.}
    \qedhere 
  \end{align*}
\end{proof}
\begin{corollary}
\label[corollary]
{TheExtCycAdjunct}
  Under the $\mathrm{Ext}/\mathrm{Cyc}$-adjunction \textup{(\parencites[\S 2.2]{BMSS2019}[\S 2.2]{SS24-Cyc})}, the adjunct of the identity on $B \widehat{\mathrm{SU}(2)}$ is a map of the form \cref{TheAdjunctMapIntoCycBHatSU2}:
  \begin{equation}
    \label{TheAdjunctMapIntoCycBHatSU2Derived}
    \begin{tikzcd}[
      column sep=15pt,
      row sep=0pt
    ]
      B \widehat{\mathrm{U}(2)}
      \ar[rr]
      \ar[
        dr
      ]
      &&
      \mathrm{Cyc}\bracket({
        B \widehat{\mathrm{SU}(2)}
      })  \mathrlap{\,.}
      \ar[
        dl
      ]
      \\
      & 
      B \mathrm{U}(1)
    \end{tikzcd}
  \end{equation}
\end{corollary}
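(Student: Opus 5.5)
The plan is to use the defining property of the Ext/Cyc adjunction from \parencites[\S 2.2]{BMSS2019}[\S 2.2]{SS24-Cyc}. For a space $\mathcal{X}$ over $B\mathrm{U}(1)$, the functor $\mathrm{Ext}$ assigns the total space of the associated homotopy $\mathrm{U}(1)$-fibration $\widehat{\mathcal{X}} \to \mathcal{X}$, regarded as a plain space. Its right adjoint $\mathrm{Cyc}(\mathcal{Y}) = (L\mathcal{Y})\sslash S^1$ is equipped with its canonical map to $B\mathrm{U}(1) \simeq BS^1$. The adjunction reads
\[
  \mathrm{Map}\bracket({\mathrm{Ext}(\mathcal{X}),\, \mathcal{Y}})
  \simeq
  \mathrm{Map}_{/B\mathrm{U}(1)}\bracket({\mathcal{X},\, \mathrm{Cyc}(\mathcal{Y})}).
\]
So the first step is to exhibit $B\widehat{\mathrm{U}(2)}$ as an object over $B\mathrm{U}(1)$ whose $\mathrm{Ext}$ is $B\widehat{\mathrm{SU}(2)}$. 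Once this is done, the identity on $B\widehat{\mathrm{SU}(2)}$ has an adjunct $B\widehat{\mathrm{U}(2)} \to \mathrm{Cyc}\bracket({B\widehat{\mathrm{SU}(2)}})$ over $B\mathrm{U}(1)$, which is the triangle \cref{TheAdjunctMapIntoCycBHatSU2Derived}.

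For the map to $B\mathrm{U}(1)$, take the composite of $B\widehat{\mathrm{U}(2)} \to B\mathrm{U}(2)$ with $B\mathrm{det} : B\mathrm{U}(2) \to B\mathrm{U}(1)$, which classifies $c_1$.

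Next check that its homotopy fiber is $B\widehat{\mathrm{SU}(2)}$. The homotopy fiber of $B\mathrm{det}$ is $B\mathrm{SU}(2)$, via the sequence $\mathrm{SU}(2)\to\mathrm{U}(2)\to\mathrm{U}(1)$. Pasting homotopy pullback squares, the fiber of the composite $B\widehat{\mathrm{U}(2)} \to B\mathrm{U}(1)$ is the pullback of $B\widehat{\mathrm{U}(2)}\to B\mathrm{U}(2)$ along $B\mathrm{SU}(2)\to B\mathrm{U}(2)$. Since $B\widehat{\mathrm{U}(2)}$ is itself the fiber of $(c_2)^2$ over $B\mathrm{U}(2)$, this pullback is the homotopy fiber of the restricted map $(c_2)^2 : B\mathrm{SU}(2)\to B^8\mathbb{Z}$. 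Here one uses that $c_2$ of $\mathrm{U}(2)$ restricts to $c_2$ of $\mathrm{SU}(2)$, as in the commuting diagram \cref{ClassifyingMapForC2Square}. That fiber is $B\widehat{\mathrm{SU}(2)}$ by \cref{TheSpaceHatBSU2}, and the comparison map \cref{MapFromBHatSU2ToBhatU2} is the induced one.

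The main subtlety is that $\mathrm{Ext}$ is defined as the homotopy fiber of the structure map to $B\mathrm{U}(1)$. So one must identify the map \cref{MapFromBHatSU2ToBhatU2} as the principal $\mathrm{U}(1)$-bundle classified by $B\widehat{\mathrm{U}(2)} \to B\mathrm{U}(1)$, and not merely as some map with fiber $\mathrm{U}(1)$ as the preceding Lemma shows. The pasting argument gives exactly this. The fiber sequence
\[
  B\widehat{\mathrm{SU}(2)}\to B\widehat{\mathrm{U}(2)}\to B\mathrm{U}(1)
\]
is the looping-shifted form of $\mathrm{U}(1) \to B\widehat{\mathrm{SU}(2)}\to B\widehat{\mathrm{U}(2)}$, consistent with the Lemma. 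Thus $\mathrm{Ext}\bracket({B\widehat{\mathrm{U}(2)}\to B\mathrm{U}(1)}) \simeq B\widehat{\mathrm{SU}(2)}$, and the adjunct of $\mathrm{id}$ yields the claimed map together with its compatibility with the maps to $B\mathrm{U}(1)$.
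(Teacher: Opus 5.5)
Your argument is correct, but it gets the needed input by a slightly different route from the paper's.

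\textbf{The paper's route.} The paper derives the corollary from the preceding Lemma. That Lemma computes the homotopy fiber of $B\widehat{\mathrm{SU}(2)}\to B\widehat{\mathrm{U}(2)}$ to be $\mathrm{U}(1)$, by commuting homotopy limits in a $3\times 3$ diagram built on $\mathrm{U}(1)\to B\mathrm{SU}(2)\to B\mathrm{U}(2)$. Two things are then left implicit:
\begin{itemize}
\item the structure map $B\widehat{\mathrm{U}(2)}\to B\mathrm{U}(1)$;
\item the principality of this $\mathrm{U}(1)$-fibration.
\end{itemize}
Principality does hold: $B\widehat{\mathrm{U}(2)}$ is simply connected, so the classifying map to $B\mathrm{hAut}(S^1)\simeq B\mathrm{O}(2)$ lifts to $B\mathrm{SO}(2)\simeq B\mathrm{U}(1)$.

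\textbf{Your route.} You work one step further along the fiber sequence. You write the structure map explicitly as $B\det$ composed with $B\widehat{\mathrm{U}(2)}\to B\mathrm{U}(2)$, i.e.\ as classifying $c_1$. You then identify its homotopy fiber with $B\widehat{\mathrm{SU}(2)}$ by the pasting law, using that $(c_2)^2$ restricts from $B\mathrm{U}(2)$ to $B\mathrm{SU}(2)$. An object over $B\mathrm{U}(1)$ whose $\mathrm{Ext}$ is $B\widehat{\mathrm{SU}(2)}$ is exactly what the $\mathrm{Ext}/\mathrm{Cyc}$ adjunction needs. So several things come for free:
\begin{itemize}
\item principality of the fibration;
\item the identification of the comparison map as the induced map of fibers;
\item the slanted map in the triangle being $c_1$, which is what later underlies $\mathrm{d}\,\theta_1 = f_2$ in the relative Sullivan model.
\end{itemize}

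\textbf{How the two compare.} Your statement implies the paper's Lemma by rotating $\mathrm{U}(1)\to B\widehat{\mathrm{SU}(2)}\to B\widehat{\mathrm{U}(2)}\to B\mathrm{U}(1)$. The Lemma gives back the existence of such a structure map only once principality is supplied. The paper's form has the advantage of showing the $\mathrm{U}(1)$-fiber directly.
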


\paragraph
{Blakers-Massey theorem}
Recall that:
\begin{enumerate}
\item a topological space is called:
\begin{itemize}
  \item  \emph{$(-2)$-connected}: always,
  \item \emph{$(-1)$-connected} if it is not empty,
  \item  \emph{$0$-connected} if it is $(-1)$-connected and path connected,
  \item  \emph{$(k \geq 1)$-connected} if it is $0$-connected and all its $\pi_{\leq k}$ are trivial;
\end{itemize}
\item a map between spaces is called: 
\begin{itemize}
  \item \emph{$k$-connected} 
  if all its homotopy fibers are $(k-1)$-connected.
\end{itemize}
\end{enumerate}
So by the homotopy LES, a $k$-connected map induces isomorphisms on $\pi_{\leq k-1}$.

\begin{proposition}[Blakers-Massey theorem {\parencites[p. 309]{Goodwillie1991CalculusII}{AnelBiedermanFinsterJoyal2020}}]
\label[proposition]
{BlakersMasseyTheorem}
Given a homotopy pushout square, where $f_i$ is $k_i$-connected,
\begin{equation}
  \begin{tikzcd}[row sep=2pt, column sep=12pt]
    X
    \ar[rr, "{ f_2 }"]
    \ar[dd, "{ f_1 }"{swap} ]
    \ar[
      dr,
      dashed,
      shorten >=-3pt
    ]
    &&
    B
    \ar[dd]
    \\
    & 
    A 
    \underset
      {\smash{
        A \underset{X}{\sqcup} B
      }}
      {\times}
    B
    \ar[ur]
    \ar[
      dl,
      shorten <=-9pt
    ]
    \\
    A
    \ar[rr]
    &&
    A \underset{X}{\sqcup} B
    \mathrlap{\,,}
  \end{tikzcd}
\end{equation}
the dashed comparison map to the homotopy pullback is $(k_1 + k_2 - 1)$-connected.
\end{proposition}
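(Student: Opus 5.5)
Since this is the classical Blakers--Massey (homotopy excision) theorem, which the paper quotes from the literature, I would not reprove it from cell attachments. Instead I would follow the modern descent-based argument of \cite{AnelBiedermanFinsterJoyal2020}. That argument works verbatim in spaces, where homotopy colimits are universal, i.e.\ stable under homotopy pullback.

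\textbf{Reformulation.} First I replace the square by an honest homotopy pushout $P := A \sqcup_X B$. The claim then concerns the gap map $g : X \to A \times_P B$. A map is $k$-connected iff all its homotopy fibers are $(k-1)$-connected, so it suffices to fix a point $(a, b, \gamma)$ of $A \times_P B$, with $\gamma$ a path in $P$ from the image of $a$ to the image of $b$, and show that $\mathrm{fib}_{(a,b,\gamma)}(g)$ is $(k_1+k_2-2)$-connected.

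\textbf{Descent.} I pull the whole span $A \leftarrow X \rightarrow B$ back along $A \to P$, and separately along $B \to P$. By universality of colimits, $A \simeq (A\times_P A) \sqcup_{A \times_P X} (A \times_P B)$, and similarly for $B$. Comparing these decompositions with the diagonals of $f_1$ and $f_2$ lets me express the fiber of $g$ as a colimit built from the fibers of $f_1$ and $f_2$. I would take this step from ABFJ (their ``generalized Blakers--Massey'' lemma) rather than rederive it. The expected outcome is that $\mathrm{fib}(g)$ is governed by the fiberwise join of $\mathrm{fib}(f_1)$ and $\mathrm{fib}(f_2)$ over $X$. More precisely, $g$ is controlled by the pushout-product of the diagonals $\Delta f_1 \,\square\, \Delta f_2$.

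\textbf{Connectivity count.} The fibers of $f_i$ are $(k_i-1)$-connected, and the join of an $m$-connected and an $n$-connected space is $(m+n+2)$-connected. Passing to diagonals lowers the connectivity of each map by one, which is where the ``$-1$'' in the statement comes from. Feeding these estimates into the join/pushout-product description gives that $g$ is $(k_1+k_2-1)$-connected. Degenerate cases, such as $k_i \le 0$ or empty fibers, need a separate glance using the convention that every space is $(-2)$-connected. In those cases the claimed bound is vacuous or follows at once.

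\textbf{Main obstacle.} The hard step is the descent identification of $\mathrm{fib}(g)$ with a join-type construction on the fibers of $f_1$ and $f_2$; the rest is bookkeeping. If that identification proves awkward, the fallback is the classical route. Reduce to a relative CW pushout along cofibrations, then proceed by induction on cells. For a single cell attachment, the bound is the relative Hurewicz and Freudenthal-type estimate $\pi_n(D^{p}, S^{p-1}) \to \pi_n(P,B)$, as in \cite[Thm.\ 4.23]{Hatcher2002}. That estimate translates directly into the stated connectivity of the comparison map to the homotopy pullback.
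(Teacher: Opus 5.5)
Your proposal is correct and takes essentially the same approach as the paper: the paper gives no proof of its own and simply quotes the theorem from Goodwillie and from Anel--Biedermann--Finster--Joyal, and your plan of invoking the latter's descent argument, with the gap map controlled by $\Delta f_1 \,\square\, \Delta f_2$, just makes that citation explicit. Your connectivity bookkeeping also checks out in the paper's convention: the diagonals have $(k_1-2)$- and $(k_2-2)$-connected fibers, the fibers of their pushout-product are joins and hence $(k_1+k_2-2)$-connected, and so the gap map is $(k_1+k_2-1)$-connected.
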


\begin{example}
 Applied to the cell attachment \cref{CellAttachmentForHP2}, which is evidently the homotopy pushout of a 3-connected map along a 7-connected map, 
 the BM theorem (\cref{BlakersMasseyTheorem}) gives that the comparison map to the homotopy fiber $\mathrm{fib}(\iota_1)$ of the first cell inclusion:
 \begin{equation}
   \begin{tikzcd}[row sep=2pt, column sep=12pt]
     S^7
     \ar[
       rr,
       "{ h_{\mathbb{H}} }"
     ]
     \ar[dd]
     \ar[
       dr,
       dashed,
       shorten >=-2pt
     ]
     &&
     S^4
     \ar[
       dd,
       "{ \iota_1 }"
     ]
     \\
     &
     \mathrm{fib}(1)
     \ar[
       ur,
       shorten <=-2pt
      ]
     \ar[
       dl,
       shorten =-4pt
     ]
     \\
     \ast
     \ar[rr]
     &&
     \mathbb{H}P^2
   \end{tikzcd}
 \end{equation}
 is 9-connected, hence induces isomorphisms
 \begin{equation}
   \label
   {LowHomotopyIsosBetweenS7AndFiber}
   \begin{tikzcd}
     \pi_{\leq 8}\,
     S^7
     \ar[
       r,
       "{ \sim }"
      ]
     &
     \pi_{\leq 8}\,
     \mathrm{fib}(\iota_1)
     \mathrlap{\,.}
   \end{tikzcd}
 \end{equation}
\end{example}

\begin{lemma}
  The homotopy fibers of $\inlinetikzcd{S^4 \ar[r, "{ \iota_1 }"] \& \mathbb{H}P^2}$ and of $\inlinetikzcd{S^4 \ar[r, "{ 1 }"] \& \mathbb{H}P^\infty}$ are identified in low degrees:
  \begin{equation}
  \label
  {LowHomotopyIsosBetweenFibers}
    \begin{tikzcd}
      \pi_{\leq 9}\,
      \mathrm{fib}(\iota_1)
      \ar[r, "{ \sim }"]
      &
      \pi_{\leq 9}\,
      \mathrm{fib}(1)
      \mathrlap{\,.}
    \end{tikzcd}
  \end{equation}
\end{lemma}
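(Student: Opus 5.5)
The plan is to compare the two homotopy fibers through the factorization $1 = \iota \circ \iota_1$, where $\iota : \mathbb{H}P^2 \hookrightarrow \mathbb{H}P^\infty$ is the remaining stage inclusion from \cref{UnitMapFromS4ToBSU2}. Since $1$ factors through $\iota_1$, there is a canonical map $\mathrm{fib}(\iota_1) \to \mathrm{fib}(1)$ over $S^4$. We then compare the homotopy long exact sequences of the two fiber sequences
\[
  \mathrm{fib}(\iota_1) \to S^4 \xrightarrow{\iota_1} \mathbb{H}P^2
  \qquad\text{and}\qquad
  \mathrm{fib}(1) \to S^4 \xrightarrow{1} \mathbb{H}P^\infty ,
\]
which are connected by the identity on $S^4$ and by $\iota$ on the bases. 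This gives a commuting ladder of long exact sequences.

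The key input is \cref{HomotopyOfHP2InsideHPInfty}: $\iota$ induces isomorphisms on $\pi_{\leq 10}$. More precisely, the cellular approximation argument shows that $\iota$ is $11$-connected, because $\mathbb{H}P^\infty$ arises from $\mathbb{H}P^2$ by attaching cells of dimension $\geq 12$. Hence $\pi_{11}$ is surjective. In the ladder, the segment around $\pi_k\,\mathrm{fib}$ reads
\[
  \pi_{k+1} S^4 \to \pi_{k+1}(\text{base}) \to \pi_k\,\mathrm{fib} \to \pi_k S^4 \to \pi_k(\text{base}) .
\]
For $k \leq 9$ the two outer base terms have degree $k+1 \leq 10$ and $k \leq 9$, so the vertical maps there are isomorphisms. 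The vertical maps on $\pi_\bullet S^4$ are identities. The five lemma then gives $\pi_k\,\mathrm{fib}(\iota_1) \xrightarrow{\sim} \pi_k\,\mathrm{fib}(1)$ for all $k \leq 9$. In fact it would give isomorphisms up to $k = 10$ if we used the surjection on $\pi_{11}$ via the four-lemma version, but $k \leq 9$ is all that is claimed.

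As an alternative packaging, one can apply the pasting law. The square with $\mathrm{fib}(\iota_1) \to \mathrm{fib}(1)$ on top and $\iota$ on the bottom yields
\[
  \mathrm{fib}\big(\mathrm{fib}(\iota_1) \to \mathrm{fib}(1)\big) \simeq \Omega\, \mathrm{fib}(\iota) .
\]
Since $\mathrm{fib}(\iota)$ is $10$-connected, this loop space is $9$-connected, so the comparison map is $10$-connected, which again gives isomorphisms on $\pi_{\leq 9}$.

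The only delicate point is low degrees, where the five lemma involves $\pi_0$ and $\pi_1$, which are pointed sets or nonabelian groups. This is harmless here: both fibers are $2$-connected, since $S^4$ and the bases are $3$-connected. Alternatively the pasting-law argument avoids the issue entirely. I would therefore present the pasting-law argument as the main proof and keep the ladder argument as a sanity check.
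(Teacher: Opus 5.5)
Your ladder argument is exactly the paper's proof. The paper likewise takes the map of fiber sequences coming from $1=\iota\circ\iota_1$ and the induced map of homotopy long exact sequences, then applies the five lemma using that $\iota$ induces isomorphisms on $\pi_{\leq 10}$.

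The pasting-law version you would make primary is a repackaging of the same information. It is cleanest stated as the fiber sequence $\mathrm{fib}(\iota_1)\to\mathrm{fib}(1)\to\mathrm{fib}(\iota)$, obtained by pasting homotopy pullbacks along $S^4\to\mathbb{H}P^2\to\mathbb{H}P^\infty$; your description as ``the square with $\iota$ on the bottom'' is imprecise. This sequence gives $\mathrm{fib}\big(\mathrm{fib}(\iota_1)\to\mathrm{fib}(1)\big)\simeq\Omega\,\mathrm{fib}(\iota)$.

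The two routes use slightly different input. The pasting route needs that $\iota$ is $11$-connected, including surjectivity on $\pi_{11}$, which you correctly get from the cells of dimension $\geq 12$. From the $\pi_{\leq 10}$ isomorphisms alone, the connectivity count would give isomorphisms only on $\pi_{\leq 8}$. The five lemma, by contrast, extracts $\pi_{\leq 9}$ from the $\pi_{\leq 10}$ isomorphisms alone. In exchange, the pasting route gives the sharper statement that the comparison map is $10$-connected. It also avoids the low-degree caveats of the five lemma, which, as you note, are moot here since both fibers are highly connected.

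One correction: your parenthetical claim that a four-lemma argument would give isomorphisms up to $k=10$ is false. At $k=10$ the vertical map $\pi_{11}\,\mathbb{H}P^2\to\pi_{11}\,\mathbb{H}P^\infty$ is surjective but not injective. So the four lemma yields only surjectivity of $\pi_{10}\,\mathrm{fib}(\iota_1)\to\pi_{10}\,\mathrm{fib}(1)$, which is consistent with your own pasting argument giving a $10$-connected, not $11$-connected, map.

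Injectivity genuinely fails at $k=10$. $\pi_{11}\,\mathbb{H}P^2$ has rank one, while $\pi_{11}\,S^4$ and $\pi_{10}\,S^4$ are finite, so the long exact sequence forces $\pi_{10}\,\mathrm{fib}(\iota_1)$ to have rank one. On the other side, $\mathrm{fib}(1)\simeq S^7$ has finite $\pi_{10}$. This does not affect the statement, which only claims $\pi_{\leq 9}$.
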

\begin{proof}
The map of homotopy fiber sequences
\begin{equation}
  \begin{tikzcd}[
    row sep=7pt
  ]
    \mathrm{fib}(\iota_1)
    \ar[r]
    \ar[d]
    &
    S^4 
    \ar[
      r,
      "{ \iota_1 }"
    ]
    \ar[
      d,
      equals
    ]
    &
    \mathbb{H}P^2
    \ar[
      d,
      hook
    ]
    \\
    \mathrm{fib}(1)
    \ar[r]
    &
    S^4
    \ar[
      r,
      "{ 1 }"
    ]
    & 
    \mathbb{H}P^\infty
  \end{tikzcd}
\end{equation}
induces the following map of corresponding homotopy LESs:
\begin{equation}
  \begin{tikzcd}[
    row sep=7pt
  ]
    \pi_{k+1}\,
    S^4
    \ar[r]
    \ar[
      d,
      equals
    ]
    &
    \pi_{k+1}\,
    \mathbb{H}P^2
    \ar[
      r
    ]
    \ar[d]
    &
    \pi_k\,
    \mathrm{fib}(\iota_1)
    \ar[r]
    \ar[d]
    &
    \pi_k\, S^4 
    \ar[
      r,
      "{ (\iota_1)\ast }"
    ]
    \ar[
      d,
      equals
    ]
    &
    \pi_k\,
    \mathbb{H}P^2
    \ar[
      d
    ]
    \\
    \pi_{k+1}\, 
    S^4
    \ar[r]
    &
    \pi_{k+1}\,
    \mathbb{H}P^\infty
    \ar[r]
    &
    \pi_k\,
    \mathrm{fib}(1)
    \ar[r]
    &
    \pi_k\,
    S^4
    \ar[
      r,
      "{ 1_\ast }"
    ]
    & 
    \pi_k\,
    \mathbb{H}P^\infty
    \mathrlap{\,.}
  \end{tikzcd}
\end{equation}
Hence the five-lemma (cf. \cite[Ex. 1.3.3]{Weibel1994}) implies the claim, by \cref{HomotopyOfHP2InsideHPInfty}.
\end{proof}

\paragraph
{The Comparison Map}
By  \cref{IntegralCohomologyRingOfS4,OrdinaryCohomologyClassified},
the canonical map \cref{UnitMapFromS4ToBSU2} has null-homotopic postcomposition with $(c_2)^2$ and hence factors, up to homotopy, through the latter's homotopy fiber \cref{TheSpaceHatBSU2} via an essentially unique map $\hat{1}$:
\begin{equation}
\label
{ComparisonMapFromUniversalProperty}
  \begin{tikzcd}[
    row sep=20pt 
  ]
    & 
    B \widehat{\mathrm{SU}(2)}
    \ar[
      d,
      "{ 
        \BHatSUTwoFibration 
      }"{swap,
        pos=.4
      }
    ]
    \ar[r]
    \ar[
      dr,
      phantom,
      "{ \lrcorner_{h} }"{pos=.1}
    ]
    &
    \ast
    \ar[d]
    \\
    S^4
    \ar[
      r,
      "{ 1 }"
    ]
    \ar[
      ur,
      dashed,
      "{ \hat 1 }"
    ]
    & 
    B
    \mathrm{SU}(2)
    \ar[
      r,
      "{
        (c_2)^2
      }"
    ]
    &
    B^8 \mathbb{Z}
    \mathrlap{\,.}
  \end{tikzcd}
\end{equation}

\begin{theorem}
\label[theorem]
{ComparisonMapIs7Equivalence}
  The comparison map $\hat 1$ \cref{ComparisonMapFromUniversalProperty} induces isomorphisms on $\pi_{\leq 7}$.
\end{theorem}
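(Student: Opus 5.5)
We compare the homotopy long exact sequences of the two maps $1 : S^4 \to B\mathrm{SU}(2)$ and $p : B\widehat{\mathrm{SU}(2)} \to B\mathrm{SU}(2)$, which are related by $p \circ \hat 1 \simeq 1$, and conclude with the five-lemma. Concretely, the homotopy commuting triangle \cref{ComparisonMapFromUniversalProperty} yields a map of homotopy fiber sequences
\begin{equation}
  \begin{tikzcd}[row sep=7pt]
    \mathrm{fib}(1)
    \ar[r]
    \ar[d]
    &
    S^4
    \ar[r, "{ 1 }"]
    \ar[d, "{ \hat 1 }"]
    &
    B \mathrm{SU}(2)
    \ar[d, equals]
    \\
    \mathrm{fib}(p)
    \ar[r]
    &
    B \widehat{\mathrm{SU}(2)}
    \ar[r, "{ p }"]
    &
    B \mathrm{SU}(2)
    \mathrlap{\,.}
  \end{tikzcd}
\end{equation}
By the definition \cref{TheSpaceHatBSU2} of $B\widehat{\mathrm{SU}(2)}$ as a homotopy fiber, the fiber of $p$ is $\Omega B^8\mathbb{Z} \simeq B^7 \mathbb{Z}$. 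On the other side, by \cref{LowHomotopyIsosBetweenS7AndFiber} and \cref{LowHomotopyIsosBetweenFibers}, $\pi_{\leq 8}\, \mathrm{fib}(1) \simeq \pi_{\leq 8}\, S^7$. In particular both fibers have vanishing $\pi_{\leq 6}$, and both have $\pi_7 \simeq \mathbb{Z}$.

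The key step is to show that the induced map $\mathrm{fib}(1) \to B^7\mathbb{Z}$ is an isomorphism on $\pi_7$. I would argue as follows. By \cref{LowHomotopyIsosBetweenS7AndFiber}, the generator of $\pi_7\,\mathrm{fib}(1)$ is represented by $S^7$ mapping into $\mathrm{fib}(1)$, and its image in $S^4$ is $h_{\mathbb{H}}$. The composite $S^7 \to \mathrm{fib}(1) \to B^7\mathbb{Z}$ is the canonical nullhomotopy of $(c_2)^2 \circ 1 \circ h_{\mathbb{H}}$, compared against the nullhomotopy extending over $D^8 \to \mathbb{H}P^2 \to \mathbb{H}P^\infty$. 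Its class in $H^7(S^7;\mathbb{Z})$ is therefore the value of $(c_2)^2$ on the top cell of $\mathbb{H}P^2 = S^4 \cup_{h_{\mathbb{H}}} D^8$, since the difference of the two nullhomotopies assembles into the map $\mathbb{H}P^2/S^4 \simeq S^8 \to B^8\mathbb{Z}$ induced by $(c_2)^2$.

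Because $H^\bullet(\mathbb{H}P^2;\mathbb{Z}) = \mathbb{Z}[c_2]/((c_2)^3)$, the class $(c_2)^2$ generates $H^8$, so this value is $\pm 1$. Making this identification precise is where I expect the main obstacle: one must verify that the connecting map of the cofiber sequence $S^7 \to S^4 \to \mathbb{H}P^2$ matches the fiber comparison up to sign. I would do this by an explicit diagram chase of pasted homotopy pullback and pushout squares.

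With the fiber comparison an isomorphism on $\pi_{\leq 7}$ (and trivially so below degree 7), the map of long exact sequences
\[
  \pi_{k+1}\, B\mathrm{SU}(2) \to \pi_k\,\mathrm{fib} \to \pi_k(-) \to \pi_k\, B\mathrm{SU}(2) \to \pi_{k-1}\,\mathrm{fib}
\]
has identity maps on the $B\mathrm{SU}(2)$ terms and isomorphisms on the fiber terms for $k \leq 7$ and $k-1 \leq 7$. The five-lemma then gives isomorphisms $\pi_k\, S^4 \to \pi_k\, B\widehat{\mathrm{SU}(2)}$ for $k \leq 7$.

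For $k = 7$, the surjectivity part of the five-lemma requires only surjectivity on $\pi_7\,\mathrm{fib}$ and injectivity on $\pi_6\,\mathrm{fib}$, both of which we have established. Handling $\pi_0$ and $\pi_1$ poses no problem, since all the spaces involved are simply connected.
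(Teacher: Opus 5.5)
Your proof is correct. Its outer structure is the paper's: a map of homotopy fiber sequences over the identity of $B\mathrm{SU}(2)$ plus the five-lemma, which reduces everything to the fiber comparison $\phi\colon \mathrm{fib}(1)\to B^7\mathbb{Z}$ being an isomorphism on $\pi_7$. The paper only differs there in treating $\pi_{\leq 6}$ separately, via cellular approximation and the long exact sequence of $p$.

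The genuine difference is in the key step, \cref{PhiHatIsIsoInDegree7}. The paper argues cohomologically. In the Serre exact sequences of the two fibrations, the transgression $\tau_8$ for $\mathrm{fib}(1)\to S^4\to B\mathrm{SU}(2)$ is an isomorphism because $H^7(S^4;\mathbb{Z})=H^8(S^4;\mathbb{Z})=0$. The transgression $\tau'_8$ is one by naturality against the path fibration over $B^8\mathbb{Z}$ (\cref{TauPrime8IsIso}). Hence $\phi^\ast$ is an isomorphism on $H^7$, and Hurewicz transfers this to $\pi_7$.

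You instead follow an explicit generator of $\pi_7\,\mathrm{fib}(1)$ and compute its image as an obstruction class. The step you flag as the main obstacle is less delicate than a general pasting argument suggests. The two nullhomotopies glue to the composite $D^8\cup_{S^7} C S^7 \to \mathbb{H}P^2\cup_{S^4} C S^4 \to B^8\mathbb{Z}$. The first map, the characteristic map of the $8$-cell together with the cone on $h_{\mathbb{H}}$, is a homotopy equivalence of $8$-spheres, since $\mathbb{H}P^2\cup_{S^4} C S^4\simeq \mathbb{H}P^2/S^4$. The second map is $(c_2)^2$ together with the chosen nullhomotopy of $(c_2)^2\circ 1$ on $C S^4$. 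The resulting class on the mapping cone restricts to $(c_2)^2$ on $\mathbb{H}P^2$, and this restriction is an isomorphism on $H^8$, again because $H^7(S^4)=H^8(S^4)=0$. So the image is a generator, for any choice of that nullhomotopy.

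Your route is more elementary: no spectral sequence and no Hurewicz. It is also more explicit: since $\mathrm{fib}(1)\simeq S^7$, it exhibits $\hat h_{\mathbb{H}}$ on $\pi_7$ directly, up to sign, as evaluation of $(c_2)^2$ on the top cell of $\mathbb{H}P^2$. The cost is point-set bookkeeping of nullhomotopies, adjunctions and signs. The paper's route is purely formal: it needs only naturality of transgression and that $(c_2)^2$ generates $H^8(B\mathrm{SU}(2);\mathbb{Z})$, never a specific generator of $\pi_7\,\mathrm{fib}(1)$.
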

\begin{proof}
  In degrees $n \leq 6$ this is straightforward: The map $1$ induces isomorphisms in these degrees (since the next cell of $B \mathrm{SU}(2)$ is in dimension 8) as does the map $\mathrm{fib}$ (by the long exact sequence of homotopy groups and using that $\pi_{\leq 7}\bracket({B^8 \mathbb{Z}})$ is trivial).

  More work is required to show isomorphy on $\pi_7$.
To this end, consider the following map of homotopy fiber sequences induced from the diagram \cref{ComparisonMapFromUniversalProperty}:
\begin{equation}
\label
{MapOfHomotopyFiberSequences}
  \begin{tikzcd}[
    row sep=12pt
  ]
    \mathrm{fib}(1)
    \ar[r]
    \ar[
      d,
      "{ \phi }"
    ]
    &
    S^4 
    \ar[
      r,
      "{ 1 }"
    ]
    \ar[
      d,
      "{ \hat 1 }"
    ]
    &
    B \mathrm{SU}(2)
    \ar[
      d,
      equals
    ]
    \\
    B^7 \mathbb{Z}
    \ar[r]
    &
    B \widehat{\mathrm{SU}(2)}
    \ar[
      r,
      "{ 
        \BHatSUTwoFibration 
      }"
    ]
    &
    B \mathrm{SU}(2)
  \end{tikzcd}
\end{equation}
The corresponding morphism of induced homotopy LESs is of this form:
\begin{equation}
\label
{MapOfLESToIsolatePi7}
  \begin{tikzcd}[row sep=12pt]
    \pi_8\, 
    B \mathrm{SU}(2)
    \ar[r]
    \ar[
      d,
      equals
    ]
    &
    \overbrace{
    \pi_7\,
    \mathrm{fib}(1)
    }^{ \mathbb{Z} }
    \ar[r]
    \ar[
      d,
      "{ \phi_\ast }"
    ]
    &
    \pi_7\,
    S^4 
    \ar[
      r,
      "{ 1_\ast }"
    ]
    \ar[
      d,
      "{ 
        \hat 1_\ast 
      }"
    ]
    &
    \pi_7\,
    B \mathrm{SU}(2)
    \ar[
      d,
      equals
    ]
    \ar[
      r
    ]
    &
    \overbrace{
    \pi_6\,
    \mathrm{fib}(1)
    }^{ 0 }
    \ar[d]
    \\
    \pi_8\,
    B \mathrm{SU}(2)
    \ar[r]
    &
    \underbrace{
    \pi_7\,
    B^7 \mathbb{Z}
    }_{ \mathbb{Z} }
    \ar[r]
    &
    \pi_7\,
    B \widehat{\mathrm{SU}(2)}
    \ar[
      r,
      "{ 
        \BHatSUTwoFibration_\ast
      }"
    ]
    &
    \pi_7\,
    B \mathrm{SU}(2)
    \ar[r]
    &
    \underbrace{
    \pi_6\,
    B^7 \mathbb{Z}
    }_{ 0 }
    \mathrlap{\,,}
  \end{tikzcd}
\end{equation}
where over the braces we used \cref{LowHomotopyIsosBetweenFibers,LowHomotopyIsosBetweenS7AndFiber}.

Hence the five-lemma (cf. \cite[Ex. 1.3.3]{Weibel1994}) implies the claim as soon as we show that $\phi_\ast$ here is an isomorphism. We relegate this to \cref{PhiHatIsIsoInDegree7}.
\end{proof}

\begin{lemma}
\label[lemma]
{PhiHatIsIsoInDegree7}
  The map $\phi_\ast$ in \cref{MapOfLESToIsolatePi7} is an isomorphism.
\end{lemma}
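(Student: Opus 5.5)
The plan is to identify $\phi_\ast$ with evaluation of $(c_2)^2$ on the fundamental class of the $8$-cell of $\mathbb{H}P^2$. Since that class generates $H^8(\mathbb{H}P^2;\mathbb{Z})$, the evaluation is $\pm 1$, so $\phi_\ast$ will be an isomorphism.

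\emph{Step 1: find a generator of the source.} By \cref{LowHomotopyIsosBetweenS7AndFiber,LowHomotopyIsosBetweenFibers}, $\pi_7\,\mathrm{fib}(1)\simeq\mathbb{Z}$. A generator is the composite
$$S^7 \longrightarrow \mathrm{fib}(\iota_1)\longrightarrow \mathrm{fib}(1).$$
Here the first map is the Blakers–Massey comparison map. It is given by the pair consisting of $h_{\mathbb{H}} : S^7 \to S^4$ and the nullhomotopy of $\iota_1\circ h_{\mathbb{H}}$ supplied by the characteristic map $D^8\to\mathbb{H}P^2$ of \cref{CellAttachmentForHP2}. So it suffices to show that $\phi$ sends this class to a generator of $\pi_7\,B^7\mathbb{Z}\simeq\mathbb{Z}$.

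\emph{Step 2: describe $\phi$ on this class.} The map $\hat 1$ of \cref{ComparisonMapFromUniversalProperty} amounts to a nullhomotopy $\eta$ of $(c_2)^2\circ 1 : S^4\to B^8\mathbb{Z}$. Such $\eta$ is unique up to homotopy, since the choices form a torsor over $H^7(S^4;\mathbb{Z})=0$. The homotopy fiber of $\BHatSUTwoFibration$ is $\Omega B^8\mathbb{Z}\simeq B^7\mathbb{Z}$. Under this identification, $\phi$ sends a point of $\mathrm{fib}(1)$, namely a pair $(x\in S^4,\ \gamma:1(x)\rightsquigarrow \ast)$, to the loop in $B^8\mathbb{Z}$ obtained by concatenating $(c_2)^2\circ\gamma$ with $\eta(x)$. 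Applied to the family from Step 1, this turns the $7$-sphere into the map $S^8 \simeq D^8\cup_{S^7}C S^7\to B^8\mathbb{Z}$. On $D^8$ this map is $(c_2)^2$ composed with the characteristic map of the $8$-cell. On the cone $C S^7$ it is $\eta\circ h_{\mathbb{H}}$. This is the adjunction $\pi_7\,\Omega B^8\mathbb{Z}\simeq\pi_8\,B^8\mathbb{Z}$, up to a sign convention.

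\emph{Step 3: evaluate the class.} Because $\eta$ extends over $S^4$, the map $S^8\to B^8\mathbb{Z}$ from Step 2 factors, up to homotopy, through
$$S^8 \longrightarrow \mathbb{H}P^2\cup_{S^4} C S^4 \;\simeq\; \mathbb{H}P^2/S^4\simeq S^8,$$
and this first map is of degree one. The second map, $S^8\to B^8\mathbb{Z}$, is the extension of $(c_2)^2$ determined by $\eta$. Its class in $H^8(\mathbb{H}P^2/S^4;\mathbb{Z})$ pulls back to $(c_2)^2\in H^8(\mathbb{H}P^2;\mathbb{Z})$. By the long exact sequence of the pair $(\mathbb{H}P^2,S^4)$, together with $H^7(S^4)=H^8(S^4)=0$, the restriction $H^8(\mathbb{H}P^2/S^4)\to H^8(\mathbb{H}P^2)$ is an isomorphism. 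Now $(c_2)^2$ generates $H^8(\mathbb{H}P^2;\mathbb{Z})$ by the ring structure $\mathbb{Z}[c_2]/((c_2)^3)$. Hence the class is $\pm 1\in\pi_8\,B^8\mathbb{Z}$, and $\phi_\ast$ is an isomorphism.

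The main obstacle is Step 2: making rigorous that the map induced on homotopy fibers, precomposed with the Blakers–Massey map, equals the adjoint of the glued map $D^8\cup C S^7\to B^8\mathbb{Z}$. To handle this, I would model all homotopy fibers as path spaces and write the maps out explicitly. Signs are irrelevant, since only the conclusion $\pm1$ is needed.
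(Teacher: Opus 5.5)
Your argument is correct, but it takes a different route from the paper.

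\textbf{The paper's route.} The paper argues purely cohomologically. It compares the Serre exact sequences of $\mathrm{fib}(1)\to S^4\to B\mathrm{SU}(2)$ and $B^7\mathbb{Z}\to B\widehat{\mathrm{SU}(2)}\to B\mathrm{SU}(2)$. The top transgression $\tau_8 : H^7(\mathrm{fib}(1);\mathbb{Z})\to H^8(B\mathrm{SU}(2);\mathbb{Z})$ is an isomorphism by exactness, since $H^7(S^4)=H^8(S^4)=0$. The bottom transgression $\tau'_8$ is an isomorphism by naturality against the universal path fibration over $B^8\mathbb{Z}$, because $(c_2)^2$ generates $H^8(B\mathrm{SU}(2);\mathbb{Z})$. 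Hence $\phi^\ast$ is an isomorphism on $H^7$, and Hurewicz (both fibers being $6$-connected) transfers this to $\phi_\ast$ on $\pi_7$.

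\textbf{Your route.} You never pass through cohomology of the fibers or Hurewicz. Instead you pick an explicit generator of $\pi_7\,\mathrm{fib}(1)$ via the Blakers--Massey map and compute its image by hand. This identifies $\phi_\ast$ with evaluating $(c_2)^2$ on the top cell of $\mathbb{H}P^2$, via $H^8(\mathbb{H}P^2,S^4)\simeq H^8(\mathbb{H}P^2)$.

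\textbf{Comparison.} Both proofs rest on the same two facts: $(c_2)^2$ is a generator in degree $8$, and $H^7(S^4)=H^8(S^4)=0$. Your Steps 2--3 are in effect a point-set unwinding of the naturality of transgression, which the paper uses as a black box. The paper's version is shorter and needs no choice of generator, but it relies on the Serre exact sequence in its valid connectivity range, on naturality of transgression, and implicitly on universal coefficients to pass from $H^7$ to $H_7$. Your version is more elementary and gives a concrete geometric meaning to $\phi_\ast$. Its price is the path-space bookkeeping you correctly flag in Step 2.

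\textbf{Small clean-ups.}
\begin{enumerate}
\item The centre of the $8$-cell need not map to the base point of $B\mathrm{SU}(2)$, so append a fixed path there.
\item ``$\eta$ extends over $S^4$'' should read ``$\eta$ is a map $CS^4\to B^8\mathbb{Z}$.'' Then the cone part $\eta\circ C h_{\mathbb{H}}$ visibly factors through $\mathbb{H}P^2\cup_{S^4}CS^4$.
\item The uniqueness of $\eta$ is not actually needed. Any choice of $\eta$ gives a class restricting to $(c_2)^2$, and the exact sequence of the pair does the rest.
\end{enumerate}
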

\begin{proof}
Consider now the map of cohomology Serre LESs (cf. \cite[Ex. 5.D with Thm. 5.2]{McCleary2001}) induced by \cref{MapOfHomotopyFiberSequences}:
\begin{equation}
\label
{MapOfSerreLESs}
  \begin{tikzcd}[row sep=12pt]
    \overbrace{
    H^8\bracket({
      S^4;
      \mathbb{Z}
    })
    }^{ 0 }
    &
    H^8\bracket({
      B \mathrm{SU}(2);
      \mathbb{Z}
    })
    \ar[
      l,
      "{ 1^\ast }"{swap}
    ]
    &
    H^7\bracket({
      \mathrm{fib}(1);
      \mathbb{Z}
    })
    \ar[
      l,
      "{ \tau_8 }"{swap},
      "{ \sim }"
    ]
    &
    \overbrace{
    H^7\bracket({
      S^4;
      \mathbb{Z}
    })}^{0}
    \ar[
      l
    ]
    \\
    & 
    H^8\bracket({
      B\mathrm{SU}(2);
      \mathbb{Z}
    })
    \ar[
      u,
      equals
    ]
    &    
    H^7\bracket({
      B^7 \mathbb{Z};
      \mathbb{Z}
    })
    \mathrlap{\,.}
    \ar[
      l,
      "{
        \tau'_8
      }"{swap}
    ]
    \ar[
      u,
      "{ 
        \phi^\ast 
      }"{swap, pos=.4}
    ]
  \end{tikzcd}
\end{equation}
Here the top transgression map, $\tau_8$, is immediately seen to be an isomorphism by exactness, and the bottom map may be checked to be an isomorphism, too (\cref{TauPrime8IsIso}). By commutativity, this implies that $\phi^\ast$ is an isomorphism. From this, the Hurewicz theorem (cf. \cite[Thm. 4.32]{Hatcher2002}) implies that also $\phi_\ast$ is an isomorphism, since $B^7 \mathbb{Z}$ and $\mathrm{fib}(1)$ are 6-connected by \cref{HomotopyGroupsOfEMSpace,LowHomotopyIsosBetweenS7AndFiber}.
\end{proof}
\begin{lemma}
\label[lemma]
{TauPrime8IsIso}
  The map $\tau'_8$ in \cref{MapOfSerreLESs} is an isomorphism.
\end{lemma}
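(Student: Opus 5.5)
Here $\tau'_8$ is the transgression in the Serre spectral sequence of the homotopy fiber sequence
\[
B^7\mathbb{Z} \longrightarrow B\widehat{\mathrm{SU}(2)} \xrightarrow{\;\BHatSUTwoFibration\;} B\mathrm{SU}(2),
\]
which is the bottom row of \cref{MapOfHomotopyFiberSequences}. This sequence is pulled back, via $(c_2)^2$, from the path-loop fibration
\[
B^7\mathbb{Z} \longrightarrow \ast \longrightarrow B^8\mathbb{Z}.
\]
The plan is to use naturality of transgression under this map of fiber sequences. The map of fibers is the identity on $B^7\mathbb{Z}$, and the map of bases is $(c_2)^2 : B\mathrm{SU}(2) \to B^8\mathbb{Z}$. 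This gives a commuting square
\[
(c_2)^2{}^\ast \circ \tau''_8 \;=\; \tau'_8,
\]
where $\tau''_8 : H^7(B^7\mathbb{Z};\mathbb{Z}) \to H^8(B^8\mathbb{Z};\mathbb{Z})$ is the transgression of the path-loop fibration.

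The first step is to show that $\tau''_8$ is an isomorphism $\mathbb{Z}\to\mathbb{Z}$ carrying the fundamental class $\iota_7$ to $\iota_8$. The total space is contractible, so the Serre long exact sequence, valid here because the fiber is 6-connected and the base is 7-connected, reduces to
\[
0 \to H^7(B^7\mathbb{Z}) \xrightarrow{\tau} H^8(B^8\mathbb{Z}) \to 0 .
\]
Hence $\tau''_8$ is an isomorphism. Equivalently, it is the cohomology suspension, which sends $\iota_8 \mapsto \iota_7$ up to sign.

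The second step uses the defining property \cref{OrdinaryCohomologyClassified} of the classifying map. Pulling back along $(c_2)^2$ sends $\iota_8 \mapsto (c_2)^2$, which by \cref{IntegralCohomologyOfBSU2} is a generator of $H^8(B\mathrm{SU}(2);\mathbb{Z}) \simeq \mathbb{Z}$. Composing the two steps, $\tau'_8$ sends a generator to a generator, so it is an isomorphism.

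The main point needing care is the naturality of transgression for a map of fibrations, together with the validity of the Serre exact sequence in this range for the bottom fibration. The base $B\mathrm{SU}(2)$ is 3-connected and the fiber is 6-connected, so the Serre long exact sequence holds up to total degree about $3+6+1$, and in particular in degree 8.

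As an independent sanity check, I would also run the Serre long exact sequence for $\BHatSUTwoFibration$ directly:
\[
H^7(B\widehat{\mathrm{SU}(2)}) \to H^7(B^7\mathbb{Z}) \xrightarrow{\tau'_8} H^8(B\mathrm{SU}(2)) \to H^8(B\widehat{\mathrm{SU}(2)}) .
\]
The first step should identify $\tau'_8$ with $\iota_7 \mapsto$ the $k$-invariant $(c_2)^2$, the standard fact that the transgression of the fundamental class in a principal $K(\mathbb{Z},7)$-fibration is its classifying class. Since $(c_2)^2$ generates, this again gives an isomorphism.
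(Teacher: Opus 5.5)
Your proposal is correct and follows essentially the paper's argument. Both use naturality of the transgression along the pullback of the path fibration over $B^8\mathbb{Z}$, with the universal transgression an isomorphism and $((c_2)^2)^\ast$ an isomorphism onto $H^8(B\mathrm{SU}(2);\mathbb{Z}) \simeq \mathbb{Z}$ by \cref{IntegralCohomologyOfBSU2}; the only difference is that you verify the universal transgression directly from the Serre exact sequence with contractible total space, where the paper cites a reference for it.
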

\begin{proof}
By construction \cref{ComparisonMapFromUniversalProperty}, the bottom homotopy fiber sequence in \cref{MapOfHomotopyFiberSequences} is equivalently the pullback of the path fibration over $B^8 \mathbb{Z}$:
\begin{equation}
  \begin{tikzcd}[
    row sep=12pt
  ]
    B^7 \mathbb{Z}
    \ar[
      r,
      equals
    ]
    \ar[d]
    &
    B^7 \mathbb{Z}
    \ar[d]
    \\
    B \widehat{\mathrm{SU}(2)}
    \ar[
      d,
      "{ 
        \BHatSUTwoFibration 
      }"{pos=.4}
    ]
    \ar[r]
    \ar[
      dr,
      phantom,
      "{ \lrcorner }"{pos=.1}
    ]
    &
    P B^8 \mathbb{Z}
    \ar[
      d
    ]
    \\
    B\mathrm{SU}(2)
    \ar[
      r,
      "{
        (c_2)^2
      }"
    ]
    &
    B^8 \mathbb{Z}
    \mathrlap{\,.}
  \end{tikzcd}
\end{equation}
Hence, again by the naturality of the cohomology Serre LES (cf. \cite[Ex. 5.D with Thm. 5.2]{McCleary2001}), we have the commuting square
\begin{equation}
  \begin{tikzcd}
    H^8\bracket({
      B \mathrm{SU}(2);
      \mathbb{Z}
    })
    &
    H^7\bracket({
      B^7\mathbb{Z};
      \mathbb{Z}
    })
    \ar[
      l,
      "{ \tau'_8 }"{swap}
    ]
    \\
    H^8\bracket({
      B^8 \mathbb{Z};
      \mathbb{Z}
    })
    \ar[
      u,
      "{
        \left((c_2)^2\right)^\ast
      }"{swap},
      "{ \sim }"{sloped}
    ]
    &
    H^7\bracket({
      B^7\mathbb{Z};
      \mathbb{Z}
    })
    \mathrlap{\,.}
    \ar[
      u,
      equals,
    ]
    \ar[
      l,
      "{
        \tau^{\mathrm{univ}}_8
      }"{swap},
      "{ \sim }"
    ]    
  \end{tikzcd}
\end{equation}
But as indicated, here the bottom map is an isomorphism (by \cite[Lem. 65.2]{Miller2021}) as is the left map, by \cref{IntegralCohomologyOfBSU2}. This implies the claim, by commutativity.
\end{proof}

This completes the proof of \cref{ComparisonMapIs7Equivalence}, establishing the integral 7-equivalence. Next we extend this to a 7-equivalence of classifying fibrations.

\paragraph
{Relation to the Quaternionic Hopf fibration}

We note that the homotopy fiber of our unit map \cref{UnitMapFromS4ToBSU2} is the unit class $\mathrm{SU}(2)$-principal bundle over $S^4$, which is the quaternionic Hopf fibration $h_{\mathbb{H}}$ (cf. \parencites[(9.62)]{Nakahara2018}[\S 5.2.3]{SS26-Orb}):
\begin{equation}
\label
{TheQuaternionicHopfFibration}
  \begin{tikzcd}[
    row sep=12pt
  ]
    S^7
    \ar[r]
    \ar[
      d,
      "{ 
        h_{\mathbb{H}} 
      }"{swap}
    ]
    \ar[
      dr,
      phantom,
      "{ \lrcorner_h }"{pos=.1}
    ]
    &
    \ast
    \ar[d]
    \\
    S^4
    \ar[
      r,
      "{ 1 }"
    ]
    &
    B \mathrm{SU}(2)
  \end{tikzcd}
\end{equation}
Its hatted analogue in our context is hence the homotopy fiber $\HatESUTwoFibration$ of $\BHatSUTwoFibration$ \cref{ComparisonMapFromUniversalProperty}:
\begin{equation}
\label
{FibrationOfHattedSpaces}
  \begin{tikzcd}[
    row sep=12pt
  ]
    B^7 \mathbb{Z}
    \ar[r]
    \ar[
      d,
      "{
        \HatESUTwoFibration
      }"{swap}
    ]
    \ar[
      dr,
      phantom,
      "{ \lrcorner_h }"{pos=.1}
    ]
    &
    \ast
    \ar[d]
    \\
    B \widehat{\mathrm{SU}(2)}
    \ar[
      r,
      "{ 
        \BHatSUTwoFibration 
      }"
    ]
    &
    B \mathrm{SU}(2)
    \mathrlap{\,.}
  \end{tikzcd}
\end{equation}
By the homotopy commutativity of \cref{ComparisonMapFromUniversalProperty} and the universal property of homotopy fibers, we have an essentially unique canonical comparison map $\hat h_{\mathbb{H}}$ between these two fibrations:
\begin{equation}
\label
{ComparisonMapFromS7}
  \begin{tikzcd}[
    row sep=12pt
  ]
    S^7
    \ar[
      r,
      "{ h_{\mathbb{H}} }"
    ]
    \ar[
      d,
      dashed,
      "{
        \hat h_{\mathbb{H}}
      }"
    ]
    & 
    S^4
    \ar[
      r,
      "{ 1 }"
    ]
    \ar[
      d,
      "{ \hat 1 }"
    ]
    &
    B \mathrm{SU}(2)
    \ar[
      d,
      equals
    ]
    \\
    B^7 \mathbb{Z}
    \ar[
      r,
      "{
        \HatESUTwoFibration
      }"
    ]
    &
    B \widehat{\mathrm{SU}(2)}
    \ar[
      r,
      "{ \BHatSUTwoFibration }"
    ]
    &
    B \mathrm{SU}(2)
    \mathrlap{\,.}
  \end{tikzcd}
\end{equation}

\begin{theorem}
\label[theorem]
{ComparisonOfFibrationsIs7Equivalence}
  The comparison map $\hat h_{\mathbb{H}}$ \cref{ComparisonMapFromS7}
  induces isomorphisms on $\pi_{\leq 7}$.
\end{theorem}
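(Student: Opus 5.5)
Since both $S^7$ and $B^7\mathbb{Z}$ are $6$-connected, $\pi_{\leq 6}$ of both sides vanish and the claim reduces to showing that $(\hat h_{\mathbb{H}})_\ast : \pi_7\, S^7 \to \pi_7\, B^7\mathbb{Z}$ is an isomorphism $\mathbb{Z}\to\mathbb{Z}$. By the Hurewicz theorem, it is equivalent to show that $\hat h_{\mathbb{H}}^\ast : H^7\bracket({B^7\mathbb{Z};\mathbb{Z}}) \to H^7\bracket({S^7;\mathbb{Z}})$ is an isomorphism.

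The plan is to identify $\hat h_{\mathbb{H}}$ with the map $\phi$ of \cref{MapOfHomotopyFiberSequences}, precomposed with the comparison map $S^7 \to \mathrm{fib}(1)$. Since the homotopy fiber of $1 : S^4 \to B\mathrm{SU}(2)$ is by \cref{TheQuaternionicHopfFibration} already $S^7$ (with $h_{\mathbb{H}}$ as fiber inclusion), the diagram \cref{ComparisonMapFromS7} is the same map of homotopy fiber sequences as \cref{MapOfHomotopyFiberSequences}. Both comparison maps arise from the universal property of homotopy fibers applied to the same homotopy-commuting square \cref{ComparisonMapFromUniversalProperty}, so by essential uniqueness $\hat h_{\mathbb{H}} \simeq \phi$ under the identification $S^7 \simeq \mathrm{fib}(1)$. \Cref{PhiHatIsIsoInDegree7} then gives that $\phi_\ast$ is an isomorphism on $\pi_7$, which concludes the proof.

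If one prefers not to rely on the identification $\mathrm{fib}(1)\simeq S^7$, an alternative is to rerun the argument of \cref{PhiHatIsIsoInDegree7} directly on \cref{ComparisonMapFromS7}. The map of cohomology Serre exact sequences gives transgressions $\tau_8 : H^7(S^7)\to H^8(B\mathrm{SU}(2))$ and $\tau'_8$. Here $\tau_8$ is an isomorphism by exactness, since $H^7(S^4)=H^8(S^4)=0$, and $\tau'_8$ is an isomorphism by \cref{TauPrime8IsIso}. Commutativity then forces $\hat h_{\mathbb{H}}^\ast$ to be an isomorphism on $H^7$, and Hurewicz finishes the argument. In this version the Serre exact sequence for $S^7\to S^4\to B\mathrm{SU}(2)$ is valid in this range because $B\mathrm{SU}(2)$ is $3$-connected and $S^7$ is $6$-connected, so the exact sequence holds up to degree $\sim 10$.

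The main obstacle is checking that the homotopy supplied by \cref{ComparisonMapFromUniversalProperty} is the one used to build both comparison maps, so that the Serre-sequence naturality square genuinely commutes. The ambiguity lies in $[S^4, B^7\mathbb{Z}] = 0$ choices of factorization and in homotopies, which are parametrized by $H^7(S^4)=0$, so the choice is essentially unique and the square commutes.
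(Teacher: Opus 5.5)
Your argument is correct, but it takes a shorter route than the paper.

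\textbf{The paper's route.} It applies the five lemma to the map of homotopy long exact sequences induced by \cref{ComparisonMapFromS7}. This requires $\hat 1_\ast$ to be an isomorphism on $\pi_{\leq 7}$ (\cref{ComparisonMapIs7Equivalence}) and an epimorphism on $\pi_8$ (\cref{HatOneInducesEpimorphismOnPi8}). The latter needs Serre finiteness of $\pi_8\, B\mathrm{SU}(2)\simeq \pi_7\, S^3$ to kill the connecting maps.

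\textbf{Your route.} You observe that $S^7$ is the homotopy fiber of $1$ by \cref{TheQuaternionicHopfFibration}. Hence \cref{MapOfHomotopyFiberSequences} and \cref{ComparisonMapFromS7} are the same map of fiber sequences, induced by the same square \cref{ComparisonMapFromUniversalProperty}, and $\hat h_{\mathbb{H}}$ is $\phi$ up to $S^7\simeq\mathrm{fib}(1)$. Since both sides are $6$-connected, \cref{PhiHatIsIsoInDegree7} finishes the proof. Your rerun of its transgression argument with $S^7$ in place of $\mathrm{fib}(1)$ also works, and it makes the Blakers--Massey input \cref{LowHomotopyIsosBetweenS7AndFiber} unnecessary, because $6$-connectedness of $S^7$ is immediate.

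\textbf{What each buys.} Your approach shows that the paper's argument is a round trip. It first obtains the $\pi_7$-isomorphism of $\hat 1$ from that of $\phi$ by one five lemma. It then recovers the $\pi_7$-isomorphism of the fiber map from that of $\hat 1$ by a second five lemma, which costs the extra $\pi_8$-epimorphism. In exchange, the paper's route uses only statements about $\hat 1$ and never has to match the two fiber comparison maps. It follows a reusable pattern: a map of fibrations over a fixed base that is an isomorphism on $\pi_{\leq n}$ and an epimorphism on $\pi_{n+1}$ of total spaces is an isomorphism on $\pi_{\leq n}$ of fibers.

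\textbf{Minor remarks.} Your closing ``main obstacle'' is a non-issue. Both fiber maps come from literally the same square, which even commutes strictly in the path-fibration model of \cref{TauPrime8IsIso}. In any case, the pairs (lift, homotopy) are the null-homotopies of $(c_2)^2\circ 1$, and these form a connected space because $H^7(S^4;\mathbb{Z})=0$. Also, your ``alternative'' does not really avoid the identification $\mathrm{fib}(1)\simeq S^7$: the Serre sequence of $S^7\to S^4\to B\mathrm{SU}(2)$ is exactly that identification. What it avoids is only matching $\hat h_{\mathbb{H}}$ with $\phi$.
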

\begin{proof}
The map of homotopy LESs induced by \cref{ComparisonMapFromS7} is of this form:
\begin{equation}
  \begin{tikzcd}[
    column sep=20pt
  ]
    \pi_{k+1}\,
    S^4
    \ar[
      r,
      "{ 1_\ast }"
    ]
    \ar[
      d,
      "{ \hat 1_\ast }"
    ]
    &
    \pi_{k+1}\,
    B \mathrm{SU}(2)
    \ar[
      r,
      "{ \partial_k }"
    ]
    \ar[
      d,
      equals
    ]
    &
    \pi_k\,
    S^7
    \ar[
      r,
      "{ 
        (h_{\mathbb{H}})_\ast 
      }"
    ]
    \ar[
      d,
      dashed,
      "{
        (\hat h_{\mathbb{H}})_\ast
      }"
    ]
    & 
    \pi_k\,
    S^4
    \ar[
      r,
      "{ 1_\ast }"
    ]
    \ar[
      d,
      "{ \hat 1_\ast }"
    ]
    &
    \pi_k\,
    B \mathrm{SU}(2)
    \ar[
      d,
      equals
    ]
    \\
    \pi_{k+1}\,
    B \widehat{\mathrm{SU}(2)}
    \ar[
      r,
      "{
        \BHatSUTwoFibration
      }"
    ]
    &
    \pi_{k+1}\,
    B \mathrm{SU}(2)
    \ar[
      r,
      "{
        \partial'_k
      }"
    ]
    &
    \pi_k\,
    B^7 \mathbb{Z}
    \ar[
      r,
      "{
        \HatESUTwoFibration_\ast
      }"
    ]
    &
    \pi_k\,
    B \widehat{\mathrm{SU}(2)}
    \ar[
      r,
      "{ 
        \BHatSUTwoFibration_\ast 
      }"
    ]
    &
    \pi_k\,
    B \mathrm{SU}(2)
    \mathrlap{\,.}
  \end{tikzcd}
\end{equation}
Hence by the five-lemma (cf. \cite[Ex. 1.3.3]{Weibel1994}) it is now sufficient to show that $\hat 1_\ast$ is an isomorphism on $\pi_{\leq 7}$ and an epimorphism on $\pi_8$. The first statement is \cref{ComparisonMapIs7Equivalence}, the second we relegate to \cref{HatOneInducesEpimorphismOnPi8}.
\end{proof}
\begin{lemma}
\label[lemma]
{HatOneInducesEpimorphismOnPi8}
  The map $\hat 1$ \cref{ComparisonMapFromUniversalProperty} induces an epimorphism on $\pi_8$. 
\end{lemma}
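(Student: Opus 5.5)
The plan is to reduce surjectivity of $\hat 1_\ast$ on $\pi_8$ to surjectivity of $1_\ast$ on $\pi_8$. For this I will use the same map of homotopy long exact sequences \cref{MapOfLESToIsolatePi7} induced by \cref{MapOfHomotopyFiberSequences}, now read one degree higher. The key input is that $\pi_8\, B\mathrm{SU}(2) \simeq \pi_7\, \mathrm{SU}(2) \simeq \pi_7\, S^3$ is a finite group. It is $\mathbb{Z}/2$, but finiteness (Serre) is all that is needed. Since the relevant neighbouring groups are $\mathbb{Z}$, every connecting map out of this torsion group into them vanishes.

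\emph{First step: $\BHatSUTwoFibration_\ast$ is an isomorphism on $\pi_8$.} In the bottom row, the relevant segment of the homotopy LES of $B^7\mathbb{Z} \to B\widehat{\mathrm{SU}(2)} \to B\mathrm{SU}(2)$ is
$$
\pi_8\, B^7\mathbb{Z} \to \pi_8\, B\widehat{\mathrm{SU}(2)} \xrightarrow{\BHatSUTwoFibration_\ast} \pi_8\, B\mathrm{SU}(2) \xrightarrow{\partial'} \pi_7\, B^7\mathbb{Z} \simeq \mathbb{Z}.
$$
The left term vanishes by \cref{HomotopyGroupsOfEMSpace}. The map $\partial'$ goes from a finite group to $\mathbb{Z}$, so it is zero. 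Hence $\BHatSUTwoFibration_\ast$ is injective and surjective on $\pi_8$.

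\emph{Second step: $1_\ast$ is surjective on $\pi_8$.} In the top row we have the exact segment
$$
\pi_8\, S^4 \xrightarrow{1_\ast} \pi_8\, B\mathrm{SU}(2) \xrightarrow{\partial} \pi_7\, \mathrm{fib}(1).
$$
By \cref{LowHomotopyIsosBetweenFibers,LowHomotopyIsosBetweenS7AndFiber}, $\pi_7\,\mathrm{fib}(1) \simeq \pi_7\, S^7 \simeq \mathbb{Z}$. So $\partial$ is again a map from a torsion group to $\mathbb{Z}$, hence zero, and $1_\ast$ is onto. Alternatively, one can see this directly from the fiber sequence $S^3 \to S^7 \to S^4$ for $h_{\mathbb{H}}$: the connecting map $\pi_8\, S^4 \to \pi_7\, S^3$ is split surjective, because the Hopf fibration admits a section up to homotopy after looping.

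\emph{Conclusion.} The right-hand square of \cref{MapOfHomotopyFiberSequences} homotopy-commutes, with the identity on $B\mathrm{SU}(2)$. This gives $\BHatSUTwoFibration_\ast \circ \hat 1_\ast = 1_\ast$ on $\pi_8$. Hence $\hat 1_\ast = (\BHatSUTwoFibration_\ast)^{-1} \circ 1_\ast$ is a composite of an isomorphism and an epimorphism, and so is surjective.

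The only step with any content is knowing that $\pi_7\, S^3$ is torsion, which I would simply cite (e.g.\ Toda's tables, or Serre's finiteness theorem). I expect no real obstacle beyond checking that the identifications of the fiber terms over the braces in \cref{MapOfLESToIsolatePi7} do indeed hold in degree $7$, which was already established.
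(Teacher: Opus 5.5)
Your proposal is correct and is essentially the paper's proof. Serre finiteness of $\pi_8\, B\mathrm{SU}(2) \simeq \pi_7\, S^3$ kills both connecting maps into $\mathbb{Z}$, which makes $\BHatSUTwoFibration_\ast$ an isomorphism and $1_\ast$ an epimorphism on $\pi_8$, and commutativity of the right-hand square then gives surjectivity of $\hat 1_\ast$; the paper's top row has $\pi_7\, S^7$ where yours has $\pi_7\,\mathrm{fib}(1)$, which is the same group here. Only your optional aside is loosely worded: what admits a homotopy section is the looped connecting map $\Omega S^4 \to S^3$ (via the adjoint $S^3 \to \Omega S^4$ of $\Sigma S^3 \simeq S^4$), not $\Omega h_{\mathbb{H}}$.
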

\begin{proof}
The map of homotopy LESs induced by \cref{ComparisonMapFromS7} looks in the relevant parts as follows:
\begin{equation}
  \begin{tikzcd}[
    row sep=12pt
  ]
    &
    \pi_8\,
    S^4
    \ar[
      r,
      ->>,
      "{
        1_\ast
      }"
    ]
    \ar[
      d,
      "{
        \hat 1_\ast
      }"
    ]
    &
    \overbrace{
      \pi_8\,
      B\mathrm{SU}(2)
    }^{
      \mathrm{finite}
    }
    \ar[
      r,
      "{ \partial_7 }",
      "{ 0 }"{swap}
    ]
    \ar[
      d,
      equals
    ]
    &
    \overbrace{
      \pi_7\,
      S^7
    }^{ \mathbb{Z} }
    \\
    \underbrace{
    \pi_8\,
    B^7 \mathbb{Z}
    }_{ 0 }
    \ar[r]
    &
    \pi_8\,
    B \widehat{\mathrm{SU}(2)}
    \ar[
      r,
      "{
        \BHatSUTwoFibration_\ast
      }",
      "{ \sim }"{swap}
    ]
    &
    \underbrace{
    \pi_8\,
    B \mathrm{SU}(2)
    }_{ \mathrm{finite} }
    \ar[
      r,
      "{ \partial'_7 }",
      "{ 0 }"{swap}
    ]
    &
    \underbrace{
      \pi_7\,
      B^7 \mathbb{Z}
    }_{\mathbb{Z}}
    \mathrlap{\,.}
  \end{tikzcd}
\end{equation}

Using Serre finiteness \cref{SerreFiniteness} on the terms with braces implies that the shown connecting homomorphisms both vanish, whence exactness implies that $1_\ast$ is an epimorphism and $\BHatSUTwoFibration_\ast$ is an isomorphism. But then commutativity implies the claim.
\end{proof}

This establishes that we have a 7-equivalence of classifying fibrations.
Next we turn to the rationalization of this situation.

\paragraph
{Rational Homotopy}
For general background on rational homotopy theory, cf. \parencites{FHT2000}{Hess2007}[\S 5]{FSS23-Char} and for exposition in our context see \cite{FSS19-RationalM}. 
Here we denote the \emph{minimal Sullivan model} (cf. \cite{Menichi2015}) for the $\mathbb{R}$-rational homotopy type of a space $\ClassifyingB$ 
(simply connected and of finite rational type)
by: 
\begin{equation}
\mathrm{CE}\bracket({
  \mathfrak{l}
  \ClassifyingB
})
  \in 
  \mathrm{dgcAlg}_{_{\mathbb{R}}}
  \,,
\end{equation}
since it is the \emph{Chevalley-Eilenberg algebra} of the \emph{real Whitehead-bracket $L_\infty$-algebra} $\mathfrak{l}\ClassifyingB$ \cite[Prop. 5.11, Rem. 5.4]{FSS23-Char}. Moreover, when we make these CE-algebras explicit we write (as in \cite{GSS25-TD}): 
\begin{equation}
\label
{PresentationOfCEAlgebra}
  \mathrm{CE}\bracket({
    \mathfrak{l}
    \ClassifyingB
  })
  \defneq
  \mathbb{R}_{_{\mathrm{d}}}
  \big[
    \text{generators}
  \big]\big/
  \big(
    \text{differential relations}
  \big)
  \mathrlap{\,,}
\end{equation}
indicating how they are \emph{free differential algebras} (FDA in supergravity terminology, cf. \cite{CastellaniDAuria2025,FSS19-RationalM,FSS15-WZW}) on the given graded generators, quotiented by the given differential ideal. 

The subscript on a generator will always indicate its degree, directly analogous to the notation for the flux densities in \cref{OnTheChargeQuantization}. Indeed, $\ClassifyingB$ is the classifying space of an admissible flux-quantization law if 
\parencites[\S 3.1-2]{SS25-Flux}{SS24-Phase}
there is a presentation \cref{PresentationOfCEAlgebra} such that these generators and relations correspond to the given flux species and their Bianchi/Gauss identities, respectively:
\begin{equation}
\label
{GaussLawFromRelations}
  \begin{tikzcd}[sep=0pt]
    \text{$\mathrm{CE}\bracket({\mathfrak{l}\ClassifyingB})$-generators}
    &\leftrightarrow&
    \text{flux species}
    \\
    \text{$\mathrm{CE}\bracket({\mathfrak{l}\ClassifyingB})$-relations}
    &\leftrightarrow&
    \substack{
    \text{duality-symmetric Bianchi identities}
    \\
    \text{/ Gauss laws on Cauchy surface.}
    }
  \end{tikzcd}
\end{equation}

More generally, for $\inlinetikzcd{ \ClassifyingA \ar[r, "{ \wp }"] \& \ClassifyingB }$ a fibration of spaces, we denote the \emph{relative} minimal Sullivan model of $\ClassifyingA$, $\wp$-relative to the minimal Sullivan model of $\ClassifyingB$, by (cf. \cite[Prop. 5.16]{FSS23-Char}):
\begin{equation}
\label
{RelativeMinimalSullivanModel}
  \mathrm{CE}\bracket({
    \mathfrak{l}_{_{\ClassifyingB}}
    \ClassifyingA
  })
  \in
  \mathrm{dgcAlg}_{_{\mathbb{R}}}
  \mathrlap{\,,}
\end{equation}
obtained by adjoining a minimal set of \emph{further} generators to the minimal model \cref{PresentationOfCEAlgebra} of the base (without removing any of these generators).

Hence, while contractible pairs of generators, of the form
$\big(\mathrm{d}\, e_p = f_{p+1}
\,,
\mathrm{d}\, f_{p+1}  = 0\big)$, do not appear in minimal Sullivan models (being non-minimal), they may appear in relative minimal Sullivan models, namely whenever the cohomology class of $f_{p+1}$ in the base space vanishes when pulled back along $\wp$.

Now, first to recall that the classical \emph{Serre finiteness theorem} (cf. \cite[\S I, Thm. 1.1.8]{Ravenel2004}) says that the homotopy groups of spheres are mostly all finite. Concretely, for $n \in \mathbb{N}$:
\begin{subequations}
\label
{SerreFiniteness}
\begin{align}
  \pi_k\,
  S^{2n+1}
  &
  \simeq
  \begin{cases}
    \mathbb{Z} & 
    \text{if } k = 2n+1
    \\[-2pt]
    \text{finite} & \text{otherwise,}
  \end{cases}
  \\
  \pi_k\,
  S^{2(n+1)}
  &
  \simeq
  \begin{cases}
    \mathbb{Z} 
    & 
    \text{if } k = 2(n+1)
    \\[-2pt]
    \mathbb{Z} 
    \times
    \mathrm{finite}
    & 
    \text{if } k = 4(n+1)-1
    \\[-2pt]
    \text{finite} & \text{otherwise.}
  \end{cases}
\end{align}
\end{subequations}
Accordingly, all finite group factors (\emph{torsion subgroups}) disappear under rationalization, and the Sullivan minimal models of spheres are (cf. \cite[\S 1.2]{Menichi2015}) concentrated on just the one or two remaining free homotopy groups:
\begin{subequations}
\label
{MinimalSullivanModelsOfSpheres}
\begin{align}
  \mathrm{CE}\bracket({
    \mathfrak{l}S^{2n+1}
  })
  &
  \simeq
  \mathbb{R}_{_{\mathrm{d}}}
  [
  g_{2n+1}
  ]\big/
  \bracket({
    \mathrm{d}\, g_{2n+1}
    = 
    0
  })
  \\
  \mathrm{CE}\bracket({
    \mathfrak{l}S^{2(n+1)}
  })
  &
  \simeq
  \mathbb{R}_{_{\mathrm{d}}}
  \left[
    \begin{aligned}
      &g_{2(n+1)}
      \\[-2pt]
      &g_{4(n+1)-1}
    \end{aligned}
  \right]\big/
  \left({
    \begin{aligned}
    \mathrm{d}\, g_{2(n+1)}
    &= 
    0
    \\[-2pt]
    \mathrm{d}\, g_{4(n+1)-1}
    & =
    \tfrac{1}{2}
    g_{2(n+1)}^2
    \end{aligned}
  }\right)
  \mathrlap{\,.}
\end{align}
\end{subequations}

With this, \cref{ComparisonMapIs7Equivalence} readily implies:
\begin{corollary}
\label[corollary]
{ComparisonMapIsRationalEquivalence}
  The comparison map $\hat 1$ \cref{ComparisonMapFromUniversalProperty} is a rational homotopy equivalence.
\end{corollary}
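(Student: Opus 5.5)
The plan is to show that $\hat 1$ induces isomorphisms on all rational homotopy groups $\pi_k(-)\otimes\mathbb{Q}$. Since both spaces are simply connected and of finite rational type, this is equivalent to $\hat 1$ being a rational homotopy equivalence. The computation splits at degree $7$.

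\emph{Degrees $k\leq 7$.} Here \cref{ComparisonMapIs7Equivalence} already gives integral isomorphisms, and these remain isomorphisms after tensoring with $\mathbb{Q}$.

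\emph{Degrees $k\geq 8$, on the side of $S^4$.} By Serre finiteness \cref{SerreFiniteness}, the only non-torsion homotopy groups of $S^4$ are $\pi_4$ and $\pi_7$. Hence $\pi_{k\geq 8}(S^4)\otimes\mathbb{Q}=0$.

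\emph{Degrees $k\geq 8$, on the side of $B\widehat{\mathrm{SU}(2)}$.} I would use the homotopy LES of the fiber sequence
\[
B^7\mathbb{Z}\longrightarrow B\widehat{\mathrm{SU}(2)}\xrightarrow{\;p\;}B\mathrm{SU}(2)
\]
from \cref{FibrationOfHattedSpaces}. For $k\geq 8$ we have $\pi_k B^7\mathbb{Z}=0=\pi_{k-1}B^7\mathbb{Z}$, so $p_\ast$ is an isomorphism $\pi_k B\widehat{\mathrm{SU}(2)}\simeq\pi_k B\mathrm{SU}(2)\simeq\pi_{k-1}S^3$. By Serre finiteness this group is finite for $k-1\neq 3$, so it is torsion for all $k\geq 8$. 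Therefore $\pi_{k\geq 8}\bigl(B\widehat{\mathrm{SU}(2)}\bigr)\otimes\mathbb{Q}=0$.

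It follows that $\hat 1_\ast\otimes\mathbb{Q}$ is an isomorphism in every degree: trivially $0\to 0$ for $k\geq 8$, and by \cref{ComparisonMapIs7Equivalence} for $k\leq 7$. By Whitehead's theorem in the rational setting, $\hat 1$ is a rational equivalence.

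As a consistency check, I would also record the minimal Sullivan model. It should be generated by $g_4$ (pulled back from $c_2$) and $g_7$ with $\mathrm{d}\,g_7=\tfrac12 g_4^2$, with normalization fixed by $(c_2)^2$. This matches \cref{MinimalSullivanModelsOfSpheres} for $S^4$.

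The only delicate point is the identification $\pi_{k}B\mathrm{SU}(2)\simeq\pi_{k-1}\mathrm{SU}(2)=\pi_{k-1}S^3$, which holds because $\mathrm{SU}(2)\cong S^3$. The rest is a routine degree-by-degree bookkeeping.
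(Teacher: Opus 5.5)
Your argument is essentially the paper's. You reduce, via Serre finiteness for $S^4$ and \cref{ComparisonMapIs7Equivalence}, to showing that $\pi_{\geq 8}\,B\widehat{\mathrm{SU}(2)}$ is finite. You then read this off the homotopy LES of $B^7\mathbb{Z}\to B\widehat{\mathrm{SU}(2)}\to B\mathrm{SU}(2)$; the paper uses the rotated sequence with base $B^8\mathbb{Z}$, which gives the same LES.

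There is one slip. At $k=8$ your claim $\pi_{k-1}B^7\mathbb{Z}=0$ is false, since $\pi_7 B^7\mathbb{Z}\cong\mathbb{Z}$. The LES then only gives the exact sequence $0\to\pi_8 B\widehat{\mathrm{SU}(2)}\xrightarrow{p_\ast}\pi_8 B\mathrm{SU}(2)\to\mathbb{Z}$. This still suffices: injectivity into the finite group $\pi_8 B\mathrm{SU}(2)\cong\pi_7 S^3$ already gives finiteness. In fact the last map vanishes because its source is finite, so $p_\ast$ is an isomorphism in this degree too. This is exactly why the paper treats $n=8$ separately from $n\geq 9$; for $k\geq 9$ your vanishing claims are correct.
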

\begin{proof}
  By Serre's finiteness theorem \cref{SerreFiniteness}, the only non-torsion homotopy groups of $S^4$ are in degrees 4 and 7. Therefore, with \cref{ComparisonMapIs7Equivalence} it is sufficient to see that all homotopy groups of $B \widehat{\mathrm{SU}(2)}$ in degree $\geq 8$ are finite. 

  Now the homotopy LES in degrees $n \geq 9$,
  \begin{equation}
    \begin{tikzcd}[
      row sep=small
    ]
      &&
      \overbrace{
      \pi_{n+1}\,
      B^8 \mathbb{Z}
      }^{\smash{0}}
      \ar[
        dll,
        snake left,
        "{
          \partial_{n}
        }"{description}
      ]
      \\
      \pi_n\,
      B \widehat{\mathrm{SU}(2)}
      \ar[
        r,
        "{ \sim }"
      ]
      &
      \pi_n\,
      B \mathrm{SU}(2)
      \ar[r]
      & 
      \underbrace{
      \pi_{n}\,
      B^8 \mathbb{Z}
      }_{\smash{0}}
      \mathrlap{\,,}
    \end{tikzcd}
  \end{equation}
  immediately implies that
  \begin{equation}
    \begin{aligned}
      \pi_n\, 
      B \widehat{\mathrm{SU}(2)}
      & 
      \simeq
      \pi_n\,
      B\mathrm{SU}(2)
      \\
      & \simeq
      \pi_{n-1}\, 
      \mathrm{SU}(2)
      \\
      & \simeq
      \pi_{n-1}\,
      S^3
      \mathrlap{\,,}
    \end{aligned}
  \end{equation}
  which is indeed finite, again by Serre's theorem.

  In the remaining degree $n = 8$ the homotopy LES instead gives
  \begin{equation}
    \begin{tikzcd}[
      row sep=small
    ]
      &&
      \overbrace{
      \pi_{9}\,
      B^8 \mathbb{Z}
      }^{\smash{0}}
      \ar[
        dll,
        snake left,
        "{
          \partial_{8}
        }"{description}
      ]
      \\
      \pi_8\,
      B \widehat{\mathrm{SU}(2)}
      \ar[
        r,
      ]
      &
      \pi_8\,
      B \mathrm{SU}(2)
      \ar[r]
      & 
      \underbrace{
      \pi_{8}\,
      B^8 \mathbb{Z}
      }_{\smash{\mathbb{Z}}}
      \mathrlap{\,.}
    \end{tikzcd}
  \end{equation}
  But since $\pi_8\, B \mathrm{SU}(2) \simeq \pi_7\, S^3$ is again finite, the last map here is zero and hence exactness implies, as before, that also $\pi_8\, B \widehat{\mathrm{SU}}(2) \simeq \pi_7\, S^3$ is finite.
\end{proof}
\begin{corollary}
\label[corollary]
{HatBSU2AdmissibleClassifyingSpace}
  Under rationalization of the comparison map $\hat 1$ \cref{ComparisonMapFromUniversalProperty},
  the minimal Sullivan  model of $B \widehat{\mathrm{SU}(2)}$ \cref{TheSpaceHatBSU2} is identified with that of the 4-sphere:
  \begin{equation}
    \begin{tikzcd}
      \mathfrak{l}S^4
      \ar[
        r,
        "{
          \mathfrak{l}\, \hat 1
        }",
        "{ \sim }"{swap}
      ]
      &
      \mathfrak{l}
      \, B \widehat{\mathrm{SU}(2)}
      \mathrlap{\,,}
    \end{tikzcd}
  \end{equation}
  hence:
  \begin{equation}
  \label
  {MinimalSullivanModelOfHatBSU2}
    \mathrm{CE}\bracket({
      \mathfrak{l}\,
      B \widehat{\mathrm{SU}(2)}
    })
    \simeq
  \mathbb{R}_{_{\mathrm{d}}}
  \left[
    \begin{aligned}
      &g_{4}
      \\[-2pt]
      &g_{7}
    \end{aligned}
  \right]\big/
  \left({
    \begin{aligned}
    \mathrm{d}\, g_{4}
    &= 
    0
    \\[-2pt]
    \mathrm{d}\, g_{7}
    & =
    \tfrac{1}{2}
    g_{4}^2
    \end{aligned}
  }\right)
  \mathrlap{.}
  \end{equation}
\end{corollary}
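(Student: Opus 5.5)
The plan is to deduce this directly from \cref{ComparisonMapIsRationalEquivalence}. Since $\hat 1$ is a rational homotopy equivalence between simply connected spaces of finite rational type, it induces a quasi-isomorphism of Sullivan models. By uniqueness of minimal Sullivan models up to isomorphism, it therefore induces an isomorphism of minimal models, i.e.\ of the Whitehead $L_\infty$-algebras $\mathfrak{l}\,\hat 1 : \mathfrak{l}S^4 \xrightarrow{\sim} \mathfrak{l}\,B\widehat{\mathrm{SU}(2)}$. Substituting the known minimal model of $S^4$ from \cref{MinimalSullivanModelsOfSpheres} (case $n=1$) then gives \cref{MinimalSullivanModelOfHatBSU2}.

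Before this, I will check the hypotheses.
\begin{enumerate}
\item \emph{Simple connectivity.} $B\widehat{\mathrm{SU}(2)}$ is simply connected because, by \cref{ComparisonMapIs7Equivalence}, $\pi_{\leq 3}$ agrees with that of $S^4$.
\item \emph{Finite rational type.} Its rational homotopy groups are finite-dimensional, by the computation in the proof of \cref{ComparisonMapIsRationalEquivalence}: it has $\mathbb{Z}$ in degrees $4$ and $7$, and is finite elsewhere.
\end{enumerate}
Equivalently, one may argue via Serre theory that its rational cohomology is finite-dimensional in each degree.

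For concreteness, I would also describe the model intrinsically as a cross-check. Start with the minimal model $\mathbb{R}_{\mathrm{d}}[c_4]$ of $B\mathrm{SU}(2)$. The defining fibration $p$ is the pullback of the path fibration along $(c_2)^2$. The relative Sullivan model therefore adjoins a generator $g_7$ with $\mathrm{d}\,g_7 = c_4^2$, rescaled to $\tfrac12 g_4^2$ by setting $g_4 := c_4$ and renormalizing $g_7$. This extension is already minimal, since $c_4^2$ is decomposable.

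The only step needing care is the passage from ``rational homotopy equivalence'' to ``isomorphism of minimal Sullivan models''. This is standard (\cite{FHT2000}), so I expect no real obstacle. The direct computation in the previous paragraph also gives the result independently of \cref{ComparisonMapIsRationalEquivalence}.
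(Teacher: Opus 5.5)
Your proposal is correct and follows the paper's route: the corollary comes from passing the rational equivalence $\hat 1$ of \cref{ComparisonMapIsRationalEquivalence} to minimal Sullivan models and inserting the model of $S^4$, and your path-fibration cross-check is exactly the ``more elementary'' argument the paper mentions right after the statement. Two minor nits: $\pi_7$ is $\mathbb{Z}$ times a finite group rather than $\mathbb{Z}$, and the direct computation by itself gives the formula for the model but identifies it via $\hat 1$ only once you also check that $\hat 1^\ast g_4$ generates $H^4(S^4;\mathbb{R})$.
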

By comparison with \cref{TheCFieldBianchis}, this shows that $B \widehat{\mathrm{SU}(2)}$ is an admissible classifying space for electromagnetic flux quantization of the gauge sector of 11D supergravity, as discussed in \cref{OnTheChargeQuantization}.

Of course, the minimal Sullivan model \cref{MinimalSullivanModelOfHatBSU2} by itself (disregarding the comparison map $\hat 1$) also follows by more elementary rational homotopy theory, not requiring \cref{ComparisonMapIsRationalEquivalence}. And since rationally $\mathrm{U}(2) \simeq_{_{\mathbb{Q}}} \mathrm{U}(1) \times \mathrm{SU}(2)$, we similarly have:
\begin{lemma}
\label[lemma]
{MinSullivanModelOfBhatU2}
  The minimal Sullivan model of $B \widehat{\mathrm{U}(2)}$ \cref{TheSpaceBHatU2} is:
  \begin{equation}
  \label
  {TheMinSullivanModelOfBhatU2}
    \mathrm{CE}\bracket({
      \mathfrak{l}
      B \widehat{\mathrm{U}(2)}
    })
    \simeq
  \mathbb{R}_{_{\mathrm{d}}}
  \left[
    \begin{aligned}
      & f_2
      \\
      &f_{4}
      \\
      &h_{7}
    \end{aligned}
  \right]\Big/
  \left({
    \begin{aligned}
    \mathrm{d}\, f_{2}
    &= 
    0
    \\[-2pt]
    \mathrm{d}\, f_4
    & = 0
    \\[-2pt]
    \mathrm{d}\, h_7
    & =
    \tfrac{1}{2}
    f_{4}^2
    \end{aligned}
  }\right)
  \mathrlap{.}
  \end{equation}
\end{lemma}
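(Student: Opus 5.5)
The minimal Sullivan model can be obtained by modelling the homotopy fiber sequence \cref{TheSpaceBHatU2} rationally as a relative Sullivan algebra over the model of $B\mathrm{U}(2)$. First, I would record the minimal model of the base. Since $H^\bullet\bracket({B\mathrm{U}(2);\mathbb{R}}) \simeq \mathbb{R}[c_1,c_2]$ is a free polynomial algebra on generators of even degree, $B\mathrm{U}(2)$ is formal and its minimal model is $\mathbb{R}_{_{\mathrm{d}}}[f_2, f_4]/(\mathrm{d}\,f_2 = 0, \mathrm{d}\,f_4 = 0)$. Here $f_2$ represents $c_1$ and $f_4$ represents $c_2$; by rescaling generators we may equivalently take $f_4$ to represent $c_2 - \tfrac14 c_1^2$, which is the rational version of $\mathrm{U}(2) \simeq_{_{\mathbb{Q}}} \mathrm{U}(1) \times \mathrm{SU}(2)$. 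In either normalization the class $(c_2)^2 \in H^8$ is represented by a polynomial cocycle in $f_2, f_4$.

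Next, I would use the standard fact that the homotopy fiber of a map $\ClassifyingB \to B^8\mathbb{Z}$ classifying a rational class $[\omega_8]$ is modelled by adjoining a single generator $h_7$ with $\mathrm{d}\,h_7 = \omega_8$. This comes from the pullback of the path fibration $P B^8\mathbb{Z} \to B^8\mathbb{Z}$, whose relative model is $\mathbb{R}_{_{\mathrm{d}}}[g_8, h_7]$ with $\mathrm{d}\,h_7 = g_8$; this is the same setup as in \cref{TauPrime8IsIso}. Applying it to the class $(c_2)^2$ gives the Sullivan model $\mathbb{R}_{_{\mathrm{d}}}[f_2,f_4,h_7]/(\mathrm{d}\,f_2=0,\ \mathrm{d}\,f_4=0,\ \mathrm{d}\,h_7 = f_4^2)$, with $f_4$ representing $c_2$. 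Rescaling $h_7$ by a factor of $2$ brings the last relation to the form $\mathrm{d}\,h_7 = \tfrac12 f_4^2$ stated in \cref{TheMinSullivanModelOfBhatU2}. The algebra is minimal because the differential lands in decomposables: $\mathrm{d}\,h_7$ is quadratic. It is also a Sullivan algebra, since the generators are well-ordered by degree with differentials in earlier generators.

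The main subtlety is the applicability of Sullivan theory, since $B\widehat{\mathrm{U}(2)}$ is not simply connected. By the homotopy LES and \cref{TheSpaceBHatU2}, it has $\pi_1 = 0$: both $\pi_1\, B\mathrm{U}(2)$ and $\pi_2\, B^8\mathbb{Z}$ vanish, so $\pi_1$ is trivial and $\pi_2 = \mathbb{Z}$. It is also of finite rational type, being a fibration of spaces of finite type. So the relative Sullivan model of the fibration computes the model of the total space; the fiber $B^7\mathbb{Z}$ is simply connected, so the base acts nilpotently.

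An alternative route avoids the relative-model machinery. Use the $\mathrm{U}(1)$-fibration of the earlier lemma, $B\widehat{\mathrm{SU}(2)} \to B\widehat{\mathrm{U}(2)}$, whose rational model adjoins a degree-1 generator $e_1$ with $\mathrm{d}\,e_1 = f_2$. Then check consistency with \cref{MinimalSullivanModelOfHatBSU2}: killing $f_2$ recovers $(g_4,g_7)$. I would use this only as a sanity check.
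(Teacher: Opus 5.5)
Your argument is correct and is essentially the elementary rational-homotopy computation the paper alludes to: take the formal model $\mathbb{R}[f_2,f_4]$ of $B\mathrm{U}(2)$, then adjoin $h_7$ with $\mathrm{d}\,h_7$ a representative of $(c_2)^2$ via the path fibration over $B^8\mathbb{Z}$. The one difference is that you work directly over $B\mathrm{U}(2)$ with $f_4 \leftrightarrow c_2$ rather than invoking $\mathrm{U}(2)\simeq_{\mathbb{Q}}\mathrm{U}(1)\times\mathrm{SU}(2)$, which is if anything cleaner, since $(c_2)^2$ does not factor through the $B\mathrm{SU}(2)$-factor of the group-theoretic splitting (whose class is $c_2-\tfrac14 c_1^2$). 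Two cosmetic slips: your own exact sequence shows $B\widehat{\mathrm{U}(2)}$ \emph{is} simply connected, contrary to your opening phrase; and passing to $c_2-\tfrac14 c_1^2$ is a triangular change of generators, not a rescaling, since in that basis $\mathrm{d}\,h_7=(f_4+\tfrac14 f_2^2)^2$ and you must change back to reach the stated form.
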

Comparison with \cref{BianchiOnF2F4} hence shows that $B \widehat{\mathrm{U}(2)}$ is an admissible classifying space for that situation, as discussed there.

From combining \cref{TheMinSullivanModelOfBhatU2} with \cref{MinimalSullivanModelOfHatBSU2} we also have:
\begin{lemma}
 The \emph{relative} minimal Sullivan model \cref{RelativeMinimalSullivanModel} of $B \widehat{\mathrm{SU}(2)}$ $\mathrm{U}(1)$-fibered \cref{MapFromBHatSU2ToBhatU2} over $B \widehat{\mathrm{U}(2)}$ is 
  \begin{equation}
    \mathrm{CE}\bracket({
      \mathfrak{l}_{_{
        B \widehat{\mathrm{U}(2)}
      }}
      B \widehat{\mathrm{SU}(2)}
    })
    \simeq
    \mathbb{R}_{_{\mathrm{d}}}
    \left[
    \begin{aligned}
      & \theta_1
      \\
      & f_2
      \\
      & f_4
      \\
      & 
      h_7
    \end{aligned}
    \right]
    \Big/
    \left(
    \begin{aligned}
      \mathrm{d}\,
      \theta_1 & = f_2
      \\
      \mathrm{d}\,
      f_2 & = 0
      \\
      \mathrm{d}\, 
      f_4 & = 0
      \\
      \mathrm{d}\,
      h_7 & = 
      \tfrac{1}{2} f_4^2
    \end{aligned}
    \right)
    \mathrlap{\,,}
  \end{equation}
  whence the identity map on $B \widehat{\mathrm{SU}(2)}$ is also modeled by the quasi-isomorphism:
  \begin{equation}
    \label
    {QuasiIsoModelingIdentityOnBhatSU2}
    \begin{tikzcd}[
      sep=0pt
    ]
      \mathfrak{l}_{_{
        B \widehat{\mathrm{U}(2)}
      }}
      B \widehat{\mathrm{SU}(2)}
      \ar[
        rr,
        "{ \sim }"
      ]
      &&
      \mathfrak{l}\,
      B \widehat{\mathrm{SU}(2)}
      \\
      f_4
      &\longmapsfrom&
      f_4
      \\
      h_7 
        &\longmapsfrom& 
      h_7
      \mathrlap{\,.}
    \end{tikzcd}
  \end{equation}
\end{lemma}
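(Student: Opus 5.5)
The plan is to build the relative minimal model directly from the homotopy $\mathrm{U}(1)$-fibration \cref{MapFromBHatSU2ToBhatU2}, with base model \cref{TheMinSullivanModelOfBhatU2}. Rationally, a principal $\mathrm{U}(1)$-fibration over a simply connected base is modeled by adjoining a single degree-1 generator $\theta_1$ whose differential is a closed degree-2 element of the base model representing the (rational) first Chern class of the fibration. In the base model the only closed degree-2 elements are the multiples $\lambda f_2$, so the relative model has the form $\mathbb{R}_{\mathrm{d}}[\theta_1, f_2, f_4, h_7]$ with $\mathrm{d}\theta_1 = \lambda f_2$. The remaining task is to show $\lambda \neq 0$; after rescaling $\theta_1$ we may then take $\lambda = 1$.

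To pin down $\lambda$ I would compare two rational cohomology computations. On one side, $B\widehat{\mathrm{SU}(2)}$ has $H^2(-;\mathbb{Q}) = 0$, which can be read off from \cref{MinimalSullivanModelOfHatBSU2}: there are no generators in degrees $\leq 3$. On the other side, if $\lambda = 0$, then the relative model would have $f_2$ as a nonzero class in $H^2$. Hence $\lambda \neq 0$.

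An equivalent, more geometric check is this. The class $f_2$ corresponds to $c_1$ pulled back from $B\mathrm{U}(2)$. The fibration $B\mathrm{SU}(2)\to B\mathrm{U}(2)$ is the $\mathrm{U}(1)$-bundle classified by the determinant, i.e.\ by $c_1$. The map $B\widehat{\mathrm{SU}(2)}\to B\widehat{\mathrm{U}(2)}$ is its pullback, as follows from the diagram in the proof of that lemma, and so it is classified by the pulled-back $c_1$, which is $f_2$.

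For the second claim, I would observe that in the relative model the pair $(\theta_1, f_2)$ is a contractible pair, as discussed around \cref{RelativeMinimalSullivanModel}. Projecting it away, i.e.\ taking the quotient by the acyclic ideal it generates, gives a quasi-isomorphism onto $\mathbb{R}_{\mathrm{d}}[f_4,h_7]/(\mathrm{d}f_4=0,\ \mathrm{d}h_7=\tfrac12 f_4^2)$. By \cref{MinimalSullivanModelOfHatBSU2} this is the minimal model of $B\widehat{\mathrm{SU}(2)}$ (identify $g_4 = f_4$ and $g_7 = h_7$). The dual map is then \cref{QuasiIsoModelingIdentityOnBhatSU2}, sending $f_4 \mapsto f_4$ and $h_7 \mapsto h_7$. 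Both algebras model the same space, and this map is compatible with the identity on $B\widehat{\mathrm{SU}(2)}$, so it models the identity.

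The main obstacle is justifying that the relative model of the $\mathrm{U}(1)$-fibration is exactly "adjoin $\theta_1$ with $\mathrm{d}\theta_1$ equal to the Chern class", with no corrections. Such corrections could in principle appear as higher relative generators. To rule them out I would argue that the fiber $\mathrm{U}(1)$ has minimal model $\mathbb{R}[\theta_1]$. A relative Sullivan model has generators matching those of the fiber, so exactly one generator $\theta_1$ is adjoined, and its differential must be a closed degree-2 element of the base, as above.
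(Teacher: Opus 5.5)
Your argument is correct and is essentially the paper's own, which only says the lemma follows by combining the minimal model of $B\widehat{\mathrm{U}(2)}$ with that of $B\widehat{\mathrm{SU}(2)}$: you adjoin one fiber generator $\theta_1$ with $\mathrm{d}\theta_1=\lambda f_2$, force $\lambda\neq 0$ via $H^2(B\widehat{\mathrm{SU}(2)};\mathbb{R})=0$, and split off the contractible pair $(\theta_1,f_2)$. One small correction is that the displayed $L_\infty$-map corresponds on CE-algebras to the inclusion $\mathbb{R}_{\mathrm{d}}[f_4,h_7]\hookrightarrow\mathbb{R}_{\mathrm{d}}[\theta_1,f_2,f_4,h_7]$, which is a dgca map because the differential does not mix the two factors, and not to your quotient map, which goes the other way; the inclusion is a section of your quotient and hence equally a quasi-isomorphism, and it is this direction that is needed later for the $\mathrm{Ext}/\mathrm{Cyc}$-adjunct in \cref{PullbackAlongExtCycAdjunct}.
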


Similarly, \cref{ComparisonOfFibrationsIs7Equivalence} readily implies, furthermore:
\begin{corollary}
\label[corollary]
{FiberComparisonIsRationalEquivalence}
  The fiber comparison map $\hat h_{\mathbb{H}}$ \cref{ComparisonMapFromS7} is a rational homotopy equivalence.
\end{corollary}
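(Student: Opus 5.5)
The plan is to argue exactly as in the proof of \cref{ComparisonMapIsRationalEquivalence}: since both $S^7$ and $B^7\mathbb{Z}$ are simply connected and of finite rational type, it suffices to show that $\hat h_{\mathbb{H}}$ induces isomorphisms on $\pi_\bullet(-)\otimes\mathbb{Q}$ in all degrees. We split this into low and high degrees.

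\textbf{Degrees $\leq 7$.} By \cref{ComparisonOfFibrationsIs7Equivalence}, $(\hat h_{\mathbb{H}})_\ast$ is already an integral isomorphism on $\pi_{\leq 7}$. A fortiori it is an isomorphism after tensoring with $\mathbb{Q}$.

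\textbf{Degrees $\geq 8$.} We show that both sides have torsion homotopy groups here, so that both become $0$ rationally and the induced map is trivially an isomorphism.
\begin{itemize}
\item For the target this is immediate from \cref{HomotopyGroupsOfEMSpace}: $\pi_k\, B^7\mathbb{Z} = 0$ for all $k \neq 7$.
\item For the source, Serre finiteness \cref{SerreFiniteness} in the odd case $2n+1 = 7$ gives that $\pi_k\, S^7$ is finite for all $k \neq 7$.
\end{itemize}
Hence for $k\geq 8$ both $\pi_k\,S^7\otimes\mathbb{Q}$ and $\pi_k\,B^7\mathbb{Z}\otimes\mathbb{Q}$ vanish.

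Combining the two ranges, $\hat h_{\mathbb{H}}$ is a rational homotopy equivalence by the rational Whitehead theorem for simply connected spaces. Equivalently, on minimal Sullivan models \cref{MinimalSullivanModelsOfSpheres} it is the identification $g_7 \mapsto g_7$ between $\mathbb{R}_{\mathrm{d}}[g_7]/(\mathrm{d}\,g_7=0)$ and the model of $B^7\mathbb{Z}$.

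There is essentially no obstacle. The only point needing care is the $\pi_7$-isomorphism, and this is exactly what \cref{ComparisonOfFibrationsIs7Equivalence}, via \cref{PhiHatIsIsoInDegree7} and \cref{HatOneInducesEpimorphismOnPi8}, already supplies.
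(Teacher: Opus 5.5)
Your proposal is correct and is essentially the paper's proof: the paper likewise combines the integral 7-equivalence of \cref{ComparisonOfFibrationsIs7Equivalence} with Serre finiteness of $\pi_{\geq 8}\, S^7$. You additionally state explicitly that $\pi_{\geq 8}\, B^7\mathbb{Z}$ vanishes, which the paper leaves implicit.
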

\begin{proof}
  By \cref{ComparisonOfFibrationsIs7Equivalence} it is sufficient to observe that all higher homotopy groups $\pi_{\geq 8} S^7$ are finite \cref{SerreFiniteness}.
\end{proof}
In particular, therefore the \emph{relative} minimal Sullivan model \cref{RelativeMinimalSullivanModel} of the fibration $\wp$ \cref{FibrationOfHattedSpaces} coincides with that of the quaternionic Hopf fibration, which is \parencites[Prop. 3.20]{FSS20-H}[Lem. 2.13]{FSS22-Twistorial}:
\begin{equation}
\label
{SullivanModelOfB7ZRelativeToBHatSUTwo}
  \mathrm{CE}\bracket({
    \mathfrak{l}_{_{
      B \widehat{\mathrm{SU}(2)}
    }}
    B^7 \mathbb{Z}
  })
  \simeq
  \mathrm{CE}\bracket({
    \mathfrak{l}_{_{S^4}}
    S^7
  })
  \simeq
  \mathbb{R}_{_{\mathrm{d}}}
  \left[
  \begin{aligned}
    & 
    h_3
    \\[2pt]
    &
    g_4
    \\[-3pt]
    & 
    g_7
  \end{aligned}
  \right]
  \Big/
  \left(
  \begin{aligned}
    \mathrm{d}\, h_3
    & = 
    g_4
    \\[2pt]
    \mathrm{d}\, g_4
    & = 0
    \\[-3pt]
    \mathrm{d}\, g_7 
    & = 
    \tfrac{1}{2} g_4^2
  \end{aligned}
  \right)
  \mathrlap{.}
\end{equation}
Comparison with \cref{Overview11D} therefore shows that both $h_{\mathbb{H}}$ as well as our more K-flavored $\wp$ classify admissible twisted relative flux quantization laws for 11D SuGra with M5-probes.

\paragraph
{Cyclification}

Given the minimal Sullivan model  of a space $\ClassifyingB$,
\begin{equation}
  \mathrm{CE}\bracket({
    \mathfrak{l}
    \ClassifyingB
  })
  \simeq
  \mathbb{R}_{_{\mathrm{d}}}
  \Big[
    (e^i)_{i \in I}
  \Big]
  \big/
  \Big(
    \mathrm{d}_{_{\ClassifyingB}}
    e^i
    =
    P^i(\vec e \,)
  \Big),
\end{equation}
we recall that the minimal Sullivan model of its cyclic loop space, $\mathrm{Cyc}\bracket(\ClassifyingB)$ \cref{Cyclification}, is given (\cite{Burghelea1985}, cf. \parencites[Prop. 3.2]{FSS17-Sphere}[Prop. 2.4(ii)]{SatiVoronov2024}) by:
\begin{equation}
\label
{SullivanModelOfCyclicLoopSpace}
  \mathrm{CE}\bracket({
    \mathfrak{l}
    \mathrm{Cyc}\bracket({
      \ClassifyingB
    })
  })
  \simeq
  \mathbb{R}_{_{\mathrm{d}}}
  \left[
  \begin{aligned}
    &
    (e^i)_{i \in I}
    \\
    & (\mathrm{s}e^i)_{i \in I}
    \\
    & \;\;\omega_2
  \end{aligned}
  \right]
  \Big/
  \left(
  \begin{aligned}
    \mathrm{d}\, 
    e^i
    & =
    \mathrm{d}_{_{\ClassifyingB}}
    e^i
    +
    \omega_2 \, \mathrm{s}e^i
    \\
    \mathrm{d}\,
    \mathrm{s}e^i
    & =
    -\mathrm{s}\bracket({
      \mathrm{d}_{_{\ClassifyingB}}
      e^i
    })
    \\
    \mathrm{d}\, \omega_2 & = 0
  \end{aligned}
  \right)
  \mathrlap{,}
\end{equation}
where $\mathrm{s}e^i$ are degree-shifted generators, $\mathrm{deg}\bracket({\mathrm{s}e^i}) = \mathrm{deg}\bracket({e^i}) - 1$, and on the right we understand the \emph{shift operator}
\begin{equation}
  \mathrm{s} : 
  \left\{
  \begin{aligned}
    e^i & \mapsto \mathrm{s}e^i
    \\
    \mathrm{s}e^i & \mapsto 
    0
  \end{aligned}
  \right.
\end{equation}
as uniquely extended to a $(-1)$-graded derivation on $\mathbb{R}\bracket[{ \bracket({e^i})_{i \in I}, \bracket({\mathrm{s}e^i})_{i \in I} }]$.

Using this with \cref{HatBSU2AdmissibleClassifyingSpace} it follows (as in \cite[\S 3]{FSS17-Sphere}, cf. \cite[Ex. 2.26]{GSS25-TD}) that:
\begin{corollary}
\label[corollary]
{RationalModelOfCyclifiedSpace}
The minimal Sullivan model of the cyclified loop space of  $B \widehat{\mathrm{SU}(2)}$ is:
\begin{equation}
  \mathrm{CE}\bracket({
    \mathfrak{l}
    \mathrm{Cyc}\bracket({
      B
      \widehat{\mathrm{SU}(2)}
    })
  })
  \simeq
  \mathbb{R}_{_{\mathrm{d}}}
  \left[
  \begin{aligned}
    & f_2
    \\
    & f_4
    \\
    & f_6
    \\
    & h_3
    \\
    & h_7
  \end{aligned}
  \right]
  \Big/
  \left(
  \begin{aligned}
    \mathrm{d}\,
    f_2
    & = 0
    \\
    \mathrm{d}\,
    f_4
    & = h_3 f_2
    \\
    \mathrm{d}\,
    f_6
    & = h_3 f_4
    \\
    \mathrm{d}\, h_3
    & = 0
    \\
    \mathrm{d}\, h_7
    & =
    \tfrac{1}{2}
    f_4^2 
    -
    f_2 f_6
  \end{aligned}
  \right)
  \mathrlap{.}
\end{equation}
\end{corollary}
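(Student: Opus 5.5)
The plan is to feed the minimal Sullivan model of $B \widehat{\mathrm{SU}(2)}$ from \cref{HatBSU2AdmissibleClassifyingSpace}, with generators $g_4, g_7$ and $\mathrm{d}\, g_7 = \tfrac{1}{2} g_4^2$, into the general formula \cref{SullivanModelOfCyclicLoopSpace} for the cyclic loop space. The result should then be rewritten in the flux notation. Since the input model is literally that of $S^4$, the computation is the same as in \cite[\S 3]{FSS17-Sphere}, and I will follow its conventions.

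\textbf{Step 1: the cyclified model.} The generators are $g_4, g_7$, their shifts $\mathrm{s}g_4$ (degree 3) and $\mathrm{s}g_7$ (degree 6), and $\omega_2$. The differentials given by \cref{SullivanModelOfCyclicLoopSpace} are
\[
  \mathrm{d}\,\omega_2 = 0 ,
  \quad
  \mathrm{d}\,g_4 = \omega_2\, \mathrm{s}g_4 ,
  \quad
  \mathrm{d}\,g_7 = \tfrac{1}{2} g_4^2 + \omega_2\, \mathrm{s}g_7 ,
\]
\[
  \mathrm{d}\,\mathrm{s}g_4 = -\mathrm{s}(0) = 0 ,
  \quad
  \mathrm{d}\,\mathrm{s}g_7 = -\mathrm{s}\bracket({\tfrac{1}{2} g_4^2}) .
\]
The last term must be computed with the graded derivation rule. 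Since $g_4$ has even degree, $\mathrm{s}(g_4 g_4) = 2\, g_4\, \mathrm{s}g_4$, which up to sign gives $\mathrm{d}\,\mathrm{s}g_7 = -\,\mathrm{s}g_4\, g_4$.

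\textbf{Step 2: renaming.} Set
\[
  f_2 := \omega_2 ,\quad
  h_3 := \mathrm{s}g_4 ,\quad
  f_4 := g_4 ,\quad
  h_7 := g_7 ,\quad
  f_6 := \pm\, \mathrm{s}g_7 .
\]
The signs are chosen so that $\mathrm{d}\, f_4 = h_3 f_2$ and $\mathrm{d}\, f_6 = h_3 f_4$, using that odd and even elements commute in the graded-commutative algebra. With that same choice of $f_6$, the relation $\mathrm{d}\, h_7 = \tfrac{1}{2} f_4^2 + \omega_2\, \mathrm{s}g_7$ becomes $\tfrac{1}{2} f_4^2 - f_2 f_6$.

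\textbf{Main obstacle.} The only delicate point is sign consistency. A single choice of sign for $f_6$ must produce both $\mathrm{d}\, f_6 = +h_3 f_4$ and the term $-f_2 f_6$ in $\mathrm{d}\, h_7$. This requires tracking the sign in the derivation $\mathrm{s}$ and the sign in front of $\omega_2\,\mathrm{s}e^i$ in the formula. If the signs do not match directly, I would first try rescaling $h_3 \mapsto -h_3$ together with $f_6 \mapsto -f_6$, which changes both relations consistently. Alternatively, I would check $\mathrm{d}^2 h_7 = 0$: the relations $\mathrm{d}\,f_4 = h_3 f_2$ and $\mathrm{d}\,f_6 = h_3 f_4$ force the relative sign between $\tfrac{1}{2} f_4^2$ and $f_2 f_6$. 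That check confirms that the presented algebra is a dgca and equals the one in the statement.

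\textbf{Minimality.} It holds because no differential has a linear term. In the formula, $\mathrm{d}_{\mathcal{B}}$ has no linear part, and all the added terms are at least quadratic.
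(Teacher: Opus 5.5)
Your proposal is correct and is essentially the paper's proof: apply the cyclification formula to the model $\mathrm{d}\,g_7 = \tfrac{1}{2} g_4^2$ of $B\widehat{\mathrm{SU}(2)}$, then rename $f_2 := \omega_2$, $f_4 := g_4$, $h_3 := \mathrm{s}g_4$, $h_7 := g_7$. Your derivation-rule result $\mathrm{d}\,\mathrm{s}g_7 = -g_4\,\mathrm{s}g_4$ is exact, not just up to sign, and it forces $f_6 := -\mathrm{s}g_7$ (the paper's choice), which gives both $\mathrm{d}\,f_6 = h_3 f_4$ and $\mathrm{d}\,h_7 = \tfrac{1}{2}f_4^2 - f_2 f_6$ at once; so no fallback is needed, and the suggested rescaling $h_3 \mapsto -h_3$, $f_6 \mapsto -f_6$ would in fact break $\mathrm{d}\,f_4 = h_3 f_2$.
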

\begin{proof}
  In view of the general formula \cref{SullivanModelOfCyclicLoopSpace} applied to \cref{MinimalSullivanModelOfHatBSU2},
  this is a matter of naming the generators as follows
  \begin{equation}
  \label
  {IdentifyingGeneratorsForModelOfCyc}
    \begin{aligned}
      h_7 & := + g_7
      \\
      f_6 & := - \mathrm{s}g_7
      \\
      f_4 & := + g_4
      \\
      h_3 & := + \mathrm{s}f_4
    \end{aligned}
  \end{equation}
  (noting the conventional choice of relative signs).
\end{proof}
Comparison with \cref{TheActualBianchisIn10D} thus shows that $\mathrm{Cyc}\bracket({B \widehat{\mathrm{SU}(2)}})$ is therefore indeed the classifying space of an admissible flux quantization law for the gauge sector of (non-massive) type IIA supergravity.

\begin{corollary}
\label[corollary]
{PullbackAlongExtCycAdjunct}
  Pullback along the map 
  $
    \begin{tikzcd}
      \mathfrak{l}\,
        B \widehat{\mathrm{U}(2)}
      \ar[r]
      &
      \mathfrak{l}\,
      \mathrm{Cyc}\bracket({
        B \widehat{\mathrm{SU}(2)}
      })
    \end{tikzcd}
  $
  which is induced on minimal Sullivan models by the map \cref{TheAdjunctMapIntoCycBHatSU2,TheAdjunctMapIntoCycBHatSU2Derived},
  is the identity of $f_2$, $f_4$ and $h_7$ and vanishes on $f_6$ and $h_3$.
\end{corollary}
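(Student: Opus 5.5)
The plan is to compute the dgc-algebra map modeling the $\mathrm{Ext}/\mathrm{Cyc}$-adjunct \cref{TheAdjunctMapIntoCycBHatSU2Derived} directly on minimal models. For that we use the known rational description of the adjunction from \parencites[\S 2.2]{BMSS2019}[\S 2.2]{SS24-Cyc}. There, given a $\mathrm{U}(1)$-fibration $\ClassifyingA \to \ClassifyingB'$ modeled relatively by adjoining $\theta_1$ with $\mathrm{d}\,\theta_1 = \omega$, and a map $\ClassifyingA \to \ClassifyingB$ modeled by $e^i \mapsto \alpha^i$, the adjunct $\ClassifyingB' \to \mathrm{Cyc}(\ClassifyingB)$ is modeled as follows. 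Decompose each image uniquely as $\alpha^i = \alpha^i_0 + \theta_1\, \alpha^i_1$, with $\alpha^i_0, \alpha^i_1$ in the base model. Then the model sends
\[
  e^i \mapsto \alpha^i_0,
  \qquad
  \mathrm{s}e^i \mapsto \pm\,\alpha^i_1,
  \qquad
  \omega_2 \mapsto \omega .
\]
So the first step is to recall this formula, and to check on \cref{SullivanModelOfCyclicLoopSpace} that it respects differentials. This check is where the sign conventions get fixed.

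Second, we apply the formula with $\ClassifyingA = B\widehat{\mathrm{SU}(2)}$ and $\ClassifyingB' = B\widehat{\mathrm{U}(2)}$, using the $\mathrm{U}(1)$-fibration \cref{MapFromBHatSU2ToBhatU2}, and with $\ClassifyingB = B\widehat{\mathrm{SU}(2)}$ mapped by the identity. By the preceding lemma, the identity is modeled by the quasi-isomorphism \cref{QuasiIsoModelingIdentityOnBhatSU2} from $\mathfrak{l}B\widehat{\mathrm{SU}(2)}$ into the relative model. It sends $g_4 \mapsto f_4$ and $g_7 \mapsto h_7$, where $\theta_1$ has $\mathrm{d}\,\theta_1 = f_2$.

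Both images are free of $\theta_1$, so their $\theta_1$-components vanish. The formula then gives
\[
  g_4 \mapsto f_4,
  \quad
  g_7 \mapsto h_7,
  \quad
  \mathrm{s}g_4 \mapsto 0,
  \quad
  \mathrm{s}g_7 \mapsto 0,
  \quad
  \omega_2 \mapsto f_2 .
\]
Translating through the renaming \cref{IdentifyingGeneratorsForModelOfCyc}, with $\omega_2$ identified with $f_2$, yields exactly the claim: $f_2, f_4, h_7$ map identically, while $f_6 = -\mathrm{s}g_7$ and $h_3 = \mathrm{s}g_4$ map to zero.

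As a consistency check, this assignment is a dg-map into \cref{TheMinSullivanModelOfBhatU2}:
\[
  \mathrm{d}\, h_7 = \tfrac12 f_4^2 - f_2 f_6 \;\mapsto\; \tfrac12 f_4^2,
  \qquad
  \mathrm{d}\, f_4 = h_3 f_2 \;\mapsto\; 0,
  \qquad
  \mathrm{d}\, f_6 = h_3 f_4 \;\mapsto\; 0 .
\]

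The main obstacle is the first step. We must justify that the adjunct is modeled by this component-extraction formula, and that the relative model of \cref{MapFromBHatSU2ToBhatU2} with \cref{QuasiIsoModelingIdentityOnBhatSU2} is the correct input. In particular, the chosen model of the identity must not secretly involve $\theta_1$, which could be changed by a homotopy. I would address this by noting that any other representative differs by a homotopy, and that rational homotopy classes of maps into the minimal model are determined up to such homotopy. Rather than rederiving the adjunction, I would cite the rational Ext/Cyc correspondence of the cited references for the formula itself.
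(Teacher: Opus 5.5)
Your proposal is correct and takes essentially the same route as the paper. The paper likewise applies the general rational formula for $\mathrm{Ext}/\mathrm{Cyc}$-adjuncts, from the references it cites, to the $\theta_1$-free model \cref{QuasiIsoModelingIdentityOnBhatSU2} of the identity: the shifted generators $h_3 = \mathrm{s}g_4$ and $f_6 = -\mathrm{s}g_7$ go to zero, while $\omega_2 \mapsto f_2$. Your explicit component-extraction formula and your check that the assignment respects differentials just spell out what the paper leaves to those references.
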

\begin{proof}
By construction in \cref{TheExtCycAdjunct}, the map is the $\mathrm{Ext}/\mathrm{Cyc}$-adjunct of the identity on $\mathfrak{l}\,B \widehat{\mathrm{SU}(2)}$. Applying the general formula \parencites[Prop. 2.25]{GSS25-TD}[Thm. 3.8]{FSS18-TD} for these adjuncts to the above model \cref{QuasiIsoModelingIdentityOnBhatSU2} for that identity map yields the correspondence:
  \begin{equation}
    \begin{tikzcd}[
      sep=0pt
    ]
      \mathfrak{l}_{_{
        B \widehat{\mathrm{U}(2)}
      }}
      B \widehat{\mathrm{SU}(2)}
      \ar[
        rr,
        "{ \sim }"
      ]
      &&
      \mathfrak{l}\,
      B \widehat{\mathrm{SU}(2)}
      \\
      f_4
      &\longmapsfrom&
      g_4
      \\
      h_7 
        &\longmapsfrom& 
      g_7
    \end{tikzcd}
    \;\;\;
    \leftrightsquigarrow
    \;\;\;
    \begin{tikzcd}[
      sep=0pt
    ]
      \mathfrak{l}B \widehat{\mathrm{U}(2)}
      \ar[
        rr
      ]
      &&
      \mathfrak{l}\mathrm{Cyc}\bracket({
        B \widehat{\mathrm{SU}(2)}
      })
      \\
      f_2 
        &\longmapsfrom& 
      f_2
      \\
      f_4 
        &\longmapsfrom& 
      f_4
      \\
      0 
        &\longmapsfrom& 
      f_6
      \\
      0 
        &\longmapsfrom& 
      h_3
      \\
      h_7 
        &\longmapsfrom& 
      h_7
      \mathrlap{\,.}
    \end{tikzcd}
  \end{equation}
  On the right this is the claimed pullback.
\end{proof}

Further with \cref{FiberComparisonIsRationalEquivalence}, the immediate generalization of this computation to the relative minimal Sullivan model of $\wp$  yields, as in \cite[\S 3]{Banerjee2026M5brane}:
\begin{corollary}
The relative minimal Sullivan model of the cyclification of $\wp$ is:
\begin{equation}
  \mathrm{CE}\bracket({
    \mathfrak{l}_{_{
      \mathrm{Cyc}\scaledbracket({
        B
        \widehat{\mathrm{SU}(2)}
      })
    }}
    \mathrm{Cyc}\bracket({
      B^7 \mathbb{Z}
    })
  })
  \simeq
  \mathbb{R}_{_{\mathrm{d}}}
  \left[
  \begin{aligned}
    & \mathscr{f}_2
    \\
    & \mathscr{h}_3
    \\[5pt]
    & f_2
    \\
    & f_4
    \\
    & f_6
    \\
    & h_3
    \\
    & h_7
  \end{aligned}
  \right]
  \Big/
  \left(
  \begin{aligned}
    \mathrm{d}\, 
    \mathscr{f}_2
    & =
    h_3
    \\
    \mathrm{d}\,
    \mathscr{h}_3
    & =
    f_4 
      - 
    f_2 \mathscr{f}_2    
    \\[5pt]
    \mathrm{d}\,
    f_2
    & = 0
    \\
    \mathrm{d}\,
    f_4
    & = h_3 f_2
    \\
    \mathrm{d}\,
    f_6
    & = h_3 f_4
    \\
    \mathrm{d}\, h_3
    & = 0
    \\
    \mathrm{d}\, h_7
    & =
    \tfrac{1}{2}
    f_4 f_4 
    -
    f_2 f_6
  \end{aligned}
  \right)
  \mathrlap{.}
\end{equation}
\end{corollary}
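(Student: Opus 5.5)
The plan is to reduce everything to the already-known relative model of the quaternionic Hopf fibration, and then apply the cyclification formula \cref{SullivanModelOfCyclicLoopSpace} fiberwise. By \cref{FiberComparisonIsRationalEquivalence}, together with \cref{ComparisonMapIsRationalEquivalence}, the pair $(\hat h_{\mathbb{H}}, \hat 1)$ is a rational equivalence of fibrations from $h_{\mathbb{H}}$ to $\wp$. Hence the relative minimal model of $\wp$ is \cref{SullivanModelOfB7ZRelativeToBHatSUTwo}: base generators $g_4, g_7$ with $\mathrm{d}\,g_7 = \tfrac{1}{2}g_4^2$, and one fiber generator $h_3$ with $\mathrm{d}\,h_3 = g_4$.

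Since $L(-)$ and the homotopy quotient $(-)\sslash S^1$ preserve fibrations and rational equivalences, $\mathrm{Cyc}(\wp)$ is rationally equivalent to $\mathrm{Cyc}(h_{\mathbb{H}})$ as a fibration over $\mathrm{Cyc}\bracket({B\widehat{\mathrm{SU}(2)}}) \simeq_{\mathbb{Q}} \mathrm{Cyc}(S^4)$. Its fiber is $L(B^7\mathbb{Z})$, rationally $S^7\times S^6$.

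The key step is to use naturality of the Burghelea model \cref{SullivanModelOfCyclicLoopSpace}. Applied to the total space, presented by the relative Sullivan algebra $\mathbb{R}_{\mathrm{d}}[g_4,g_7,h_3]$, it gives $\mathrm{Cyc}$ of the total space. The base inclusion $\mathbb{R}_{\mathrm{d}}[g_4,g_7] \hookrightarrow \mathbb{R}_{\mathrm{d}}[g_4,g_7,h_3]$ is carried to the inclusion of the Burghelea model of $\mathrm{Cyc}$ of the base. This inclusion adjoins exactly the new generators $h_3$ and $\mathrm{s}h_3$, whose differentials are
\[
\mathrm{d}\,h_3 = g_4 + \omega_2\,\mathrm{s}h_3,
\qquad
\mathrm{d}\,\mathrm{s}h_3 = -\mathrm{s}g_4 .
\]
So it is a relative Sullivan algebra modeling $\mathrm{Cyc}(\wp)$. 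Its fiber model $\mathbb{R}[h_3,\mathrm{s}h_3]$, with zero differential, matches $L S^7$. It is minimal because the differentials of the adjoined generators are decomposable modulo base generators, as required for relative minimality in \cref{RelativeMinimalSullivanModel}.

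It then remains to rename generators. On the base we use \cref{IdentifyingGeneratorsForModelOfCyc} together with $f_2 := \omega_2$. On the fiber we set
\[
\mathscr{h}_3 := h_3,
\qquad
\mathscr{f}_2 := -\,\mathrm{s}h_3 .
\]
This gives $\mathrm{d}\,\mathscr{f}_2 = \mathrm{s}g_4 = \mathrm{s}f_4 = h_3$ and $\mathrm{d}\,\mathscr{h}_3 = f_4 - f_2\,\mathscr{f}_2$. The base relations are those of \cref{RationalModelOfCyclifiedSpace}.

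I expect the main obstacle to be the justification of the relative use of Burghelea's formula. This means checking that the model of $\mathrm{Cyc}$ is functorial on Sullivan algebras, so that the fibration $\mathrm{Cyc}(\wp)$ is modeled by the induced map of models. This is where I would invoke the naturality statements of \cite[Prop. 2.4]{SatiVoronov2024} and \cite[\S 3]{FSS17-Sphere}, as in \cite[\S 3]{Banerjee2026M5brane}. The sign bookkeeping for $\mathrm{s}$ as a $(-1)$-graded derivation is the only other point needing care.
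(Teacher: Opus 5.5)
Your proposal is correct and is essentially the paper's own argument: you transport the Hopf relative model \cref{SullivanModelOfB7ZRelativeToBHatSUTwo} along the rational equivalence of fibrations, apply the Burghelea formula \cref{SullivanModelOfCyclicLoopSpace} to the relative algebra, and rename via \cref{IdentifyingGeneratorsForModelOfCyc} together with $\mathscr{h}_3 := h_3$, $\mathscr{f}_2 := -\mathrm{s}h_3$, exactly as the paper does. One slip in your side remarks, which does not affect the computation: the homotopy fiber of $\mathrm{Cyc}(\wp)$ is $L$ of the fiber of $\wp$, i.e.\ $L\,\mathrm{SU}(2) \simeq L S^3 \simeq_{\mathbb{Q}} K(\mathbb{Q},3)\times K(\mathbb{Q},2)$, which is precisely what $\mathbb{R}[h_3,\mathrm{s}h_3]$ with zero differential models --- it is not $L(B^7\mathbb{Z})$, not rationally $S^7\times S^6$, and not $L S^7$.
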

\begin{proof}
  In view of the general formula \cref{SullivanModelOfCyclicLoopSpace}, now applied to \cref{SullivanModelOfB7ZRelativeToBHatSUTwo}, this is a matter of naming the further generators, beyond \cref{IdentifyingGeneratorsForModelOfCyc},
  as follows:
  \begin{equation}
    \begin{aligned}
      \mathscr{h}_3
      & := 
      + h_3
      \\
      \mathscr{f}_2 
        & := - \mathrm{s}\mathscr{h}_3
    \end{aligned}
  \end{equation}
  (noting again the conventional choice of signs).
\end{proof}
Comparison with \cref{Overview} shows that, therefore, $\wp$ is indeed an admissible flux quantization law for type IIA with D4-probes.

\newpage
\section
{Conclusion}

The global completion of a higher gauge theory by flux/charge quantization is a choice, but candidate quantization laws in nontrivial situations have hardly been explored.

The traditional proposal that RR-flux (D-brane charge) is quantized in twisted topological K-theory suffers (besides from the problematic ontology of the ``massive'' flux species $F_0$ and $F_{10}$) from its unjustified disregard of the nonlinear Bianchi identity for the dual NS-flux. At the same time, such nonlinear Bianchis do not admit flux quantization in the traditionally considered abelian (stable) cohomology theories like K-theory.

The solution is to pass to flux quantization in nonabelian cohomology theories. For 11D Sugra, the minimal choice (``Hypothesis H'') of full electromagnetic flux quantization (minimal in number of cells in the corresponding classifying space) is in 4-Cohomotopy, classified by the 4-sphere. Under the process of double dimensional reduction of flux-quantization laws, namely by cyclification of their classifying spaces, this implies a consistent quantization of (actual, non-massive) type IIA SuGra.

However, the resulting \emph{cyclified 4-Cohomotopy} is in general not closely related to K-theory, at least not manifestly so. Hence, in order to see to which extent the K-theory conjecture may be salvaged while retaining dimensional oxidation to 11D (hence retaining M/IIA duality even after flux quantization), here we looked for a more ``K-flavored'' but still strictly admissible choice of electromagnetic flux quantization of the gauge sector of type IIA SuGra. 

One such choice is naturally suggested by understanding the classifying space $S^4 \simeq \mathbb{H}P^1$ as the quaternionic projective line, and as such as the 4-skeleton of the infinite quaternionic projective space $\mathbb{H}P^\infty \simeq B \mathrm{SU}(2)$. The latter, being $\mathrm{U}(1)$-fibered over a finite stage in the classifying space for complex K-theory, is closer to K-theory, and as such might be understood as classifying a form of ``unstable K-theory'', at least bringing in cocycle data in the form of unitary bundles. 

Our main mathematical observation here (\cref{ComparisonMapIs7Equivalence}) is that, up to dimension 7, the homotopy theoretic identification of $S^4$ with $B \mathrm{SU}(2)$ is obstructed exactly only by the square of the second Chern class, which vanishes on $S^4$ but not on $B \mathrm{SU}(2)$: Forcing this squared class to vanish yields a deformed classifying space $B \widehat{\mathrm{SU}(2)}$ which does classify an admissible electromagnetic flux quantization of 11D SuGra but is more ``K-theory flavored'' than $S^4$ (with which however it agrees up to dimension 7), in that unitary bundles, at least of rank 2, serve as cocycle data. 

Concretely, the double-dimensional reduction of this quantization law yields an admissible proper quantization of the IIA flux densities classified by the cyclic loop space $\mathrm{Cyc}\bracket({B \widehat{\mathrm{SU}(2)}})$, which receives a canonical map from the ``deformed unstable K-theory'' classified by $B \widehat{\mathrm{U}(2)}$. 

This map thus exhibits $\mathrm{Cyc}\bracket({B \widehat{\mathrm{SU}(2)}})$ as classifying a nonabelian cohomology theory that is an unstable form (correction) of the traditionally expected twisted K-theory, while being strictly admissible as a flux-quantization law for (non-massive) type IIA supergravity, including the nonlinear Bianchi identity of the dual NS-flux. A variant also extends to the twisted relative quantization of the further worldvolume fluxes on D4-branes.

Moreover, still by the main theorem (\cref{ComparisonMapIs7Equivalence}), the flux quantization of this ``K-flavored'' law strictly coincides with the cohomotopical one (Hypothesis H) on 10D spacetimes of the form $\mathbb{R}^{1,3} \times X^6$. Hence, it immediately inherits the latter's favorable properties regarding topological conditions expected in M-theory.

In this sense, $\mathrm{Cyc}\bracket({B \widehat{\mathrm{SU}(2)}})$ classifies a nonabelian cohomology theory that fares fairly well in meeting both: (i) the strict technical demands on a proper flux quantization of type IIA SuGra, as well as (ii) further string/M-theoretic folklore expectations.

But it needs to be re-emphasized that global completion of any higher gauge theory, like type IIA SuGra, is a choice that requires attention, and that there are an infinitude of further admissible flux quantization laws waiting to be explored and to be compared to given desiderata and expectations.

\medskip

\noindent \textbf{Acknowledgments.}
We thank Alonso Perez-Lona for bringing further references 
to our attention. 


\printbibliography

\end{document}